\documentclass{article}
\usepackage[utf8]{inputenc}

\usepackage{graphicx}
\usepackage{epstopdf}
\usepackage{amsthm}

\newtheorem{df}{Definition}
\newtheorem{lem}[df]{Lemma}
\newtheorem{thm}[df]{Theorem}
\newtheorem{cor}[df]{Corollary}

\title{Hamiltonian cycles and paths  in hypercubes with disjoint faulty edges}

\author{
Janusz Dybizba\'nski, Andrzej Szepietowski \\
\small Institute of Informatics,\\ 
\small Faculty of Mathematics, Physics and Informatics, \\
\small University of Gda\'nsk, 
\small 80-308 Gda\'nsk, Poland \\
\small \texttt{jdybiz@inf.ug.edu.pl, matszp@inf.ug.edu.pl}}

\date{}

\begin{document}

\maketitle

\begin{abstract} 
We consider hypercubes with pairwise disjoint faulty edges. 
An $n$-dimensional hypercube $Q_n$ is an undirected graph with $2^n$ nodes, each labeled with a distinct binary strings of length $n$.
The parity of the vertex is 0 if the number of ones in its labels is even, and is 1 if the number of ones is odd.
Two vertices $a$ and $b$ are connected by the edge iff $a$ and $b$ differ in one position.
If $a$ and $b$ differ in position $i$, then we say that the edge $(a,b)$  goes in direction $i$ and we define the parity of the edge as the parity of the end with 0 on the position $i$. It was already known that $Q_n$ is not Hamiltonian if all edges going in one direction and of the same parity are faulty.

In this paper we show that if $n\ge4$ then all other hypercubes are Hamiltonian. In other words, every cube $Q_n$, with $n\ge4$ and disjoint faulty edges is Hamiltonian if and only if for each direction there are two healthy crossing edges of different parity.
\end{abstract} 

{\bf Keywords:} Hamiltonian cycle, hypercube, fault tolerance, disjoint faulty edges.

\section{Introduction}\label{intro}
For definitions and notations we follow~\cite{DybSzepCycle, DybSzepPaths}. An $n$-dimensional hypercube (cube), denoted by $Q_n$, is an undirected graph with $2^n$ nodes, each labeled with a distinct binary string $b_{n-1}\dots b_1b_0$, where $b_i\in\{0,1\}$. A vertex $x=b_{n-1}\dots b_i\dots b_0$ and the vertex $x^{(i)}=b_{n-1}\dots \bar{b_i}\dots b_0$ are connected by an edge along dimension $i$, where $\bar{b_i}$ is the negation of $b_i$. The hypercube $Q_n$ is  bipartite, the set of nodes is the union of two sets: nodes of parity 0 (the number of ones in their labels is even), and nodes of parity 1 (the number of ones is odd), and each edge connects nodes of different parity. We shall denote by $Par(x)$ the parity of the vertex $x$.

For any dimension  $i$,  there is a partition of $Q_n$ into two subcubes $Q^i_0=\{x\in Q_n:x_i=0\}$ and $Q^i_1=\{x\in Q_n:x_i=1\}$. We shall often use simpler notations $Q_0$ or $Q_1$, if the dimension is obvious or irrelevant. For any edge $e=(u,v)$, if $u\in Q^i_0$ and $v\in Q^i_1$, then we say that $e$ is crossing and goes in $i$ dimension. We say that $e$ is of parity 0 (or 1) if $u$ is of parity 0 (or 1).

A path or a cycle is \textit{Hamiltonian} if it visits each node in the cube exactly once. The hypercube is Hamiltonian, i.e., it contains a Hamiltonian cycle. Moreover, it is \textit{laceable}, which means that any two vertices of different parity can be connected by a Hamiltonian path. Note that  vertices of the same parity cannot be connected by a Hamiltonian path.  An important property of the hypercube is its fault tolerance. It contains Hamiltonian cycles and paths even if some edges or vertices are faulty.

In this paper we consider Hamiltonian cycles and paths in hypercube $Q_n$ with faulty edges (we shall denote the set of faulty edges  by $F$). Tsai et al.~\cite{Tsai} showed that if $|F|\le n-2$, then $Q_n$ is  Hamiltonian. The hypercube $Q_n$ is $n$-regular, so it is not Hamiltonian if it contains a vertex with incident $n-1$ faulty edges. We shall call such a set of faulty edges a local trap.  Xu et al. \cite{4} proved that if  each vertex is incident with at least two healthy edges and $|F|\le n-1$, then the cube is still Hamiltonian. Chan and Lee~\cite{1} showed  that also with $|F|\le 2n-5$  faulty edges $Q_n$ is Hamiltonian. Shih et al.~\cite{2}  observed that the bound $2n-5$ is optimal, because with $2n-4$ faulty edges it is possible to build another kind of local trap, namely a (healthy) cycle $(u,v,w,x)$, where all edges going out of the cycle from $u$ and $w$ are faulty. This kind of trap is called $f_4$-cycle in~\cite{Li}. Szepietowski~\cite{Sz} observed that these two kinds of traps are examples of a more general scheme, namely subgraphs disconnected halfway.

\begin{df}\label{defdhw}
A proper subgraph $T\subset Q_n$ is disconnected halfway if  half of the nodes of $T$ have parity 0 and either (1) all edges joining the nodes of parity 0 in $T$  with nodes outside $T$, are faulty, or (2) all edges joining the nodes of parity 1 in $T$  with nodes outside $T$, are faulty.
\end{df}

\begin{lem}\label{lem2}(\cite{Sz}) The cube 
$Q_n$ is not Hamiltonian if it has a trap disconnected halfway.
\end{lem}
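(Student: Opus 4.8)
The plan is to argue by contradiction. Suppose $Q_n$ with the given faulty edges admits a Hamiltonian cycle $C$ using only healthy edges, and suppose $T$ is disconnected halfway, say under condition (1) of Definition~\ref{defdhw} (condition (2) will follow by interchanging the two parities throughout). Write $k$ for the number of parity-0 nodes of $T$; by hypothesis $T$ also contains exactly $k$ parity-1 nodes. Since $Q_n$ is bipartite, $C$ alternates parities and every edge of $C$ joins a parity-0 node to a parity-1 node. I would keep this bipartite structure and the degree constraint ``each vertex has degree exactly $2$ in $C$'' as the two levers of the whole argument.

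First I would localize attention to the parity-0 nodes of $T$. Each such node contributes two edges of $C$. By condition (1) every edge leaving $T$ from a parity-0 node is faulty and hence cannot lie on $C$; therefore both $C$-edges at each parity-0 node of $T$ stay inside $T$, and, being edges of a bipartite graph, each lands on a parity-1 node of $T$. This produces $2k$ edges of $C$ that are all internal to $T$ and all incident to parity-1 nodes of $T$. These edges are pairwise distinct, because in a bipartite graph an edge has a unique parity-0 endpoint, so no two parity-0 nodes share an edge.

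The decisive counting step is then to observe that these $2k$ internal edges already saturate the parity-1 side of $T$: the $k$ parity-1 nodes have total degree $2k$ in $C$, and I have just exhibited $2k$ distinct $C$-edges incident to them. Consequently no parity-1 node of $T$ can carry a $C$-edge leaving $T$ either, so $C$ never crosses the boundary of $T$. But $C$ is a single cycle visiting every vertex of $Q_n$ and $T$ is a proper subgraph, so $C$ must also visit $Q_n\setminus T$; a connected cycle cannot do so without crossing the boundary. This contradiction shows that no healthy Hamiltonian cycle exists.

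I expect the only delicate point to be the bookkeeping in the counting step: one must verify that the $2k$ edges are genuinely distinct and that the saturation is an exact equality rather than a mere inequality, so that one may conclude the parity-1 nodes have no escaping edges. An equivalent route that sidesteps this is to decompose $C\cap T$ into its maximal subpaths: each subpath has both endpoints incident to a boundary-crossing $C$-edge, which by condition (1) must be anchored at a parity-1 node, so every subpath contains one more parity-1 than parity-0 node; summing over all subpaths forces their number to be zero, again contradicting that $C$ visits the nonempty $T$. Either formulation yields the lemma.
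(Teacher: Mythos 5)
Your proof is correct. Note that the paper itself offers no proof of this lemma---it is quoted from \cite{Sz}---so there is no internal argument to compare against; what you have written is essentially the standard argument from the cited source, in two equivalent guises. The degree-saturation count is sound: each of the $k$ parity-0 nodes of $T$ forces both of its cycle edges to end at parity-1 nodes of $T$, these $2k$ edges are pairwise distinct because every edge of a bipartite graph has a unique parity-0 endpoint, and they exactly fill the $2k$ cycle-edge slots available at the $k$ parity-1 nodes, so no cycle edge crosses the boundary of $T$; your maximal-subpath variant (each subpath of $C$ inside $T$ begins and ends at a parity-1 node, hence carries a surplus of one parity-1 vertex, so equal parity counts force zero subpaths) is the formulation closer to \cite{Sz}. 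Two small points you should make explicit. First, ``proper subgraph'' must be read as ``$T$ omits at least one vertex of $Q_n$'' (as all the paper's examples do); otherwise conditions (1)--(2) of Definition~\ref{defdhw} could hold vacuously for a spanning subgraph and the lemma would be false, and your final contradiction---that the connected cycle $C$ meets both the nonempty vertex set of $T$ and its nonempty complement without a crossing edge---needs exactly this reading. Second, in the subpath formulation you should dispose of the degenerate case in which $C$ lies entirely inside $T$, since then $C\cap T$ is the whole cycle rather than a disjoint union of paths; this case is again excluded by the same properness assumption. With those remarks added, either formulation is a complete and correct proof.
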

If $Q_n$ has a node $u$ with $n-1$ faulty incident edges, then $u$ with its neighbor form a subcube of dimension 1 disconnected halfway. We shall denote it by $Q_1$-DHW. If $T$ is a cycle $C_4=(u,v,w,x)$ (subcube of dimension 2), then we obtain the trap described in \cite{2} and mentioned above. Similarly, in $Q_n$, with $n\ge 4$, using  $4(n-3)$ faulty edges we can disconnect halfway  the 3-dimensional subcube $Q_3$. Szepietowski~\cite{Sz} proved that  the cube $Q_n$ with  $n\ge5$ and with $2n-4$ faulty edges   is not Hamiltonian only if it  contains a $Q_1$-DHW or $Q_2$-DHW trap. Liu and Wang \cite{Liu} showed that the same is true for hypercubes with $3n-8$ faulty edges. The bound $3n-8$ is sharp, because with $3n-7$   faulty edges it is possible to build another kind of local trap, namely a (healthy) cycle $(u,v,w,x,y,z)$, where all edges going out of the cycle from $u$, $w$, and $y$ are faulty, see~\cite{Li,Liu,Sz}. Li et al.~\cite{Li} proved that with $n\ge 6$ and with $3n-7$ faulty edges $Q_n$ is not Hamiltonian only if it  contains a $Q_1$-DHW, $Q_2$-DHW, or $C_6$-DHW trap.

In~\cite{Sz}, another type of trap was described which is not a set disconnected halfway. Consider a node $u$ that  has three neighbors $v$, $w$ and $x$, each of degree 2, see Fig.~\ref{figCLtrap}. Then, the cube $Q_n$ is not Hamiltonian.
\begin{figure}[htb]
\centering
\includegraphics{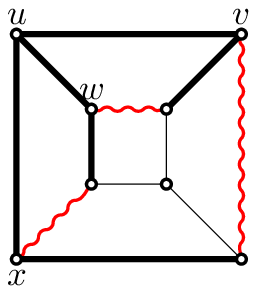}
\caption{A claw trap.}
\label{figCLtrap}
\end{figure}
We call them  claw traps. In $Q_3$ claw consists of three disjoint faulty edges. If $n\ge 4$, the faulty edges in Claw are not disjoint. 
 Each of the trap mentioned above consists of small collection of faulty edges where the number of faulty edges is bounded by a linear function.

In this paper we consider cubes $Q_n$ with pairwise disjoint faulty edges, i.e. for any two faulty edges $e,f\in F$, $e\cap f=\emptyset$. In~\cite{DybSzepCycle} authors  described all nonhamiltonian cubes with $n=3$ and $n=4$. From their results it follows that if $F$ consists of  pairwise disjoint edges, then the cube $Q_3$ is not Hamiltonian if and only if it contains a $Q_2$-DHW or a Claw trap, and that $Q_4$ is not Hamiltonian if and only if it contains a $Q_3$-DHW trap. It is easy to see that  $Q_3$ is laceable if and only if $|F| \le 1$. Using computer we checked that $Q_4$ with disjoint faulty edges is not laceable only if it contains either $Q_3$-DHW or six parallel faulty edges.

If the set $F$ consists of disjoint edges, then $|F|\le 2^{n-1}$ and only one trap of the kind DHW is possible. Namely $Q_{n-1}$-DHW. 

\begin{df}
We say that $Q_n$ with $n\ge 3$, has a SCDHW (subcube disconnected halfway) along dimension $i$,  if all crossing edges of parity  0 (or 1) going in dimension $i$ are faulty.
\end{df}
Suppose that $F$ consists of disjoint edges and $Q_n$ with $n\ge 3$, has a SCDHW along dimension $i$. We may assume that all vertices of parity 0 in $Q^i_0$ are cut off from the vertices in $Q^i_1$. Then  there are faulty edges neither in $Q^i_0$ nor in $Q^i_1$. However some other edges going in dimension $i$ can be faulty. By Lemma~\ref{lem2}, $Q_n$ is not Hamiltonian, hence, it is not laceable. Moreover, if $Q_n$ contains a Hamiltonian path, then one of its ends is of parity 0 and belongs to $Q^i_0$ and the other end is of parity 1 and belongs to $Q^i_1$, see 
Lemma~10 in~\cite{DybSzepPaths}.
 
\begin{df}
We say that $Q_n$ has DTBCE (dimension with two balanced crossing edges), if there is a dimension $i$ and two vertices $u, w\in Q^i_0$ of different parity  such that: all crossing edges in dimension $i$, except two $(u,u^{(i)})$ and $(w,w^{(i)})$, are faulty.
\end{df}
Note that it is possible that the vertices $u$ and $w$ (or $u^{(i)}$ and $w^{(i)}$) form a faulty edge. Note also that if $F$ consists of disjoint edges then no other faulty edges are possible, hence, there is at most one faulty edge in $Q^i_0$ and at most one  in $Q^i_1$. The following lemma is easy to prove,
\begin{lem}\label{DTBCE1} The cube  $Q_n$ with DTBCE:

 (i)  has a Hamiltonian cycle, even if $(u,w)\in F$ and $(u^{(i)},w^{(i)})\in F$
 
 (ii)  is not laceable. There is neither a HP going from $u$ to $w$ nor a HP going from $u^{(i)}$ to $w^{(i)}$.
\end{lem}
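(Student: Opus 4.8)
The plan is to exploit the partition of $Q_n$ along dimension $i$ into the two subcubes $Q_0=Q^i_0$ and $Q_1=Q^i_1$, each isomorphic to $Q_{n-1}$, together with the observation that the only crossing edges available to any Hamiltonian structure are the two healthy ones $(u,u^{(i)})$ and $(w,w^{(i)})$. A preliminary remark organizes the fault pattern: since $F$ consists of disjoint edges and every vertex of $Q_0\setminus\{u,w\}$ is already incident to its own faulty crossing edge, such a vertex can lie in no other faulty edge, so the only faulty edge that can sit inside $Q_0$ is $(u,w)$; symmetrically the only faulty edge inside $Q_1$ is $(u^{(i)},w^{(i)})$. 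Thus each subcube carries at most one faulty edge, and that edge joins its two designated terminals.

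For part (i) I would build the Hamiltonian cycle by concatenation. Since $u$ and $w$ have different parities, so do $u^{(i)}$ and $w^{(i)}$, and the same holds relative to each subcube because deleting the (constant) coordinate $i$ preserves relative parity. Using the laceability of $Q_{n-1}\cong Q_0$ — in its one-edge-fault-tolerant form when $(u,w)$ is faulty — I would choose a Hamiltonian path $P_0$ of $Q_0$ from $u$ to $w$ that avoids $(u,w)$, and likewise a Hamiltonian path $P_1$ of $Q_1$ from $u^{(i)}$ to $w^{(i)}$ avoiding $(u^{(i)},w^{(i)})$. Splicing these through the two healthy crossing edges gives the cycle that follows $P_0$ from $u$ to $w$, crosses by $(w,w^{(i)})$, follows $P_1$ from $w^{(i)}$ to $u^{(i)}$, and returns by $(u^{(i)},u)$. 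This visits every vertex exactly once and uses no faulty edge, even when both $(u,w)$ and $(u^{(i)},w^{(i)})$ are faulty, which is the extra clause asserted in (i).

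For part (ii) I would argue by a crossing-count/parity argument that no Hamiltonian path joins $u$ to $w$. Suppose $P$ were such a path. Its endpoints $u,w$ both lie in $Q_0$, so $P$ crosses dimension $i$ an even number of times; since only two crossing edges are healthy and $P$ must reach $Q_1$, it crosses exactly twice, using both $(u,u^{(i)})$ and $(w,w^{(i)})$. But each of these edges is incident to an endpoint of $P$, and an endpoint has degree $1$ in $P$, so these crossing edges are forced to be the unique $P$-edges at $u$ and at $w$. Hence $P$ begins $u,u^{(i)}$ and ends $w^{(i)},w$, and its entire middle portion must stay in $Q_1$ (both crossing edges being already spent). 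Then $P$ meets $Q_0$ only in $u$ and $w$, contradicting $|Q_0|=2^{n-1}>2$. The identical argument with the roles of $Q_0$ and $Q_1$ interchanged rules out a Hamiltonian path from $u^{(i)}$ to $w^{(i)}$, so $Q_n$ is not laceable.

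The main obstacle is the fault case of (i): guaranteeing that the internal Hamiltonian path can avoid the possibly faulty terminal edge $(u,w)$. By the preliminary remark this reduces to the one-edge-fault-tolerant laceability of $Q_{n-1}$ for the single adjacent opposite-parity pair whose connecting edge is exactly the faulty one; for $n\ge4$ this follows from the standard fault-tolerant laceability of $Q_{n-1}$ with $n-1\ge3$, and the borderline $n=3$ (subcube $Q_2\cong C_4$) can be checked directly, since a Hamiltonian path between two adjacent vertices of $C_4$ avoiding their joining edge always exists.
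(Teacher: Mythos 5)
The paper offers no proof of this lemma at all --- it is introduced with ``The following lemma is easy to prove'' --- so there is nothing to compare against line by line; your argument is correct and is evidently the intended one. Your preliminary remark (disjointness of $F$ plus the fact that every vertex of $Q^i_0\setminus\{u,w\}$ already lies in a faulty crossing edge forces the only possible faulty edge inside $Q^i_0$ to be $(u,w)$, and symmetrically in $Q^i_1$) is exactly the right organizing step, the splice through the two healthy crossing edges proves (i), and the crossing-count argument for (ii) is sound: a path with both endpoints in $Q^i_0$ crosses dimension $i$ an even number of times, must cross at least twice to be Hamiltonian, hence uses both healthy crossing edges, and since each is incident to a degree-one endpoint the path is trapped in $Q^i_1$ after its first step, meeting $Q^i_0$ in only two vertices. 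One simplification worth noting: in (i) you do not need one-edge-fault-tolerant laceability or the separate $n=3$ check, because a Hamiltonian path from $u$ to $w$ in any graph on more than two vertices can never use the edge $(u,w)$ --- that edge would have to be the unique path-edge at both degree-one endpoints, the same observation you already exploit in (ii) --- so plain laceability of $Q_{n-1}$ (for opposite-parity $u,w$, which they are, since fixing coordinate $i$ preserves relative parity) already yields a path avoiding the only possible faulty edge in each half.
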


In this paper we show that  if $n\ge 4$ and $F$ consists of pairwise disjoint edges, then only the cubes which has SCDHW are not Hamiltonian. In other words $Q_n$ with $n\ge 4$, is Hamiltonian if and only if for each dimension $i$, there is a healthy crossing edge of parity 0 and a healthy crossing edge of parity 1. Moreover we show that $Q_n$ is not laceable if and only if it contains either SCDHW or DTBCE. In other words $Q_n$  is laceable if and only if for each dimension $i$, there are at least three healthy edges not of the same parity.

\section{Preliminaries}
Let $f_0$ denote the number of faulty edges in $Q_0$ and $f_1$ denote the set of faulty edges in $Q_1$. By $dist(A,B)$, we denote the distance between the vertices $A$ and $B$.
We shall use abbreviations:  
\begin{itemize}
\item[-] HP $x\to y$ to denote a Hamiltonian path from $x$ to $y$, 
\item[-] HP $x\to_F y$ to denote a Hamiltonian path from $x$ to $y$ passing through one faulty edge,
\item[-] $x\not\to y$ to denote that there are no HP going from $x$ to $y$,
\item[-] $x\not\to_F y$ to denote that there are no HP going from $x$ to $y$ and passing through one faulty edge.
\end{itemize}



\begin{lem}\label{L2} (\cite{Tsai})
If $n\ge 3$ and $|F|\le n-2$, then $Q_n$ is laceable. 
\end{lem}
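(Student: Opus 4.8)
The plan is to prove the statement by induction on $n$. For the base case $n=3$ we have $|F|\le 1$, and laceability of $Q_3$ with at most one faulty edge can be settled by a direct finite inspection: up to the symmetries of $Q_3$ there are only a handful of positions of the single faulty edge relative to the two prescribed endpoints. For the inductive step I assume the statement for $Q_{n-1}$, i.e. that $Q_{n-1}$ with at most $(n-1)-2=n-3$ faulty edges is laceable, and prove it for $Q_n$ with $|F|\le n-2$ and $n\ge 4$.

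The key first move is the choice of a good splitting dimension. If $F=\emptyset$ the cube is laceable classically, so assume $F\neq\emptyset$ and fix a faulty edge $e$ going in some dimension $k$. Splitting $Q_n$ along dimension $k$ into $Q_0=Q^k_0$ and $Q_1=Q^k_1$, the edge $e$ becomes a crossing edge, so the faults lying inside the two subcubes satisfy $f_0+f_1\le|F|-1\le n-3$. In particular $f_0\le n-3$ and $f_1\le n-3$, so the induction hypothesis applies to \emph{each} subcube. I would also record that the number of healthy crossing edges in dimension $k$ is at least $2^{n-1}-(n-2)$, which is comfortably positive for $n\ge 4$.

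Given the two prescribed endpoints $x,y$ of opposite parity, I split into two cases according to whether they lie in the same subcube. If $x\in Q_0$ and $y\in Q_1$, I look for a vertex $a\in Q_0$ with $Par(a)=Par(y)$, with $a\neq y^{(k)}$, and with the crossing edge $(a,a^{(k)})$ healthy; a counting argument ($2^{n-2}$ candidates of the right parity, minus at most $1$ forbidden vertex and at most $n-2$ blocked by crossing faults) shows such an $a$ exists for $n\ge 4$. Since $Par(a)\neq Par(x)$ and $Par(a^{(k)})\neq Par(y)$, a Hamiltonian path $x\to a$ in $Q_0$ and a Hamiltonian path $a^{(k)}\to y$ in $Q_1$ both exist by the induction hypothesis, and they concatenate through $(a,a^{(k)})$ into the desired Hamiltonian path of $Q_n$. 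If instead $x,y$ lie in the same subcube, say $Q_0$, I first take a Hamiltonian path $P\colon x\to y$ in $Q_0$ and then detour through $Q_1$: I choose an edge $(a,b)$ of $P$ whose two parallel crossing edges $(a,a^{(k)})$ and $(b,b^{(k)})$ are both healthy, which is possible because $P$ has $2^{n-1}-1$ edges while the crossing faults block at most $2(n-2)$ of them, and I replace $(a,b)$ by the detour $a\to a^{(k)}\to\cdots\to b^{(k)}\to b$, whose middle portion is a Hamiltonian path $a^{(k)}\to b^{(k)}$ of $Q_1$ (again furnished by induction, since $a^{(k)}$ and $b^{(k)}$ are adjacent and of opposite parity).

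I expect the main obstacle to be the bookkeeping at the boundary of the induction, namely verifying that the counting inequalities used to locate the connecting crossing edge (Case~A) and the detour edge (Case~B) genuinely leave room once $n=4$, since these are the tightest instances while everything is slack for larger $n$. The base case $n=3$, though only a finite check, must also be treated separately for the same reason: with $n-3=0$ the subcubes are forced to be fault-free, so the single fault has to be used \emph{as} the connecting crossing edge rather than absorbed into a subcube, and the generic Case~A count degenerates to zero.
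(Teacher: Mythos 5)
Your proposal is correct, but there is nothing in the paper to compare it against: the paper does not prove this lemma, it imports it as a black box from \cite{Tsai}. Measured against the cited source, your route is genuinely lighter. Tsai et al.\ prove a stronger statement --- that $Q_n$ with at most $n-2$ faulty edges is \emph{strongly} Hamiltonian laceable, i.e.\ same-parity vertices are also joined by paths of length $2^n-2$ --- and their induction must carry that stronger invariant through every case. Your argument shows the strengthening is unnecessary if one only wants laceability: splitting along the dimension of a faulty edge forces $f_0+f_1\le n-3$, so the plain induction hypothesis applies to each $(n-1)$-subcube, and both of your cases only ever request opposite-parity Hamiltonian paths there. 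Your bookkeeping survives the tight instance $n=4$: in Case~A there are $2^{n-2}=4$ candidates for $a$ and at most $n-2=2$ are blocked by crossing faults, leaving at least one (in fact your exclusion $a\ne y^{(k)}$ is vacuous, since $y^{(k)}$ has parity opposite to $Par(y)$ and is never a candidate, so you even have one more vertex of slack than you claimed); in Case~B the path $P$ has $2^{n-1}-1=7$ edges of which at most $2(n-2)=4$ are blocked. The base case $n=3$ with $|F|\le 1$ is indeed a finite check and agrees with the paper's own remark that $Q_3$ is laceable if and only if $|F|\le 1$. What the citation buys the paper is the stronger property for free and no case analysis; what your proof buys is a short, self-contained argument whose only external input is fault-free laceability of $Q_3$ and Havel's theorem for $F=\emptyset$.
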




\begin{lem}\label{L1} (\cite{LW}, see also~\cite{Tsai}) 
Suppose that  $x$, $y$,  and  $v$ are three vertices in  $Q_n$ with $n\ge 2$, and that $Par(x)=Par(y)\ne Par(v)$. Then there is a HP $x\to y$ in $Q_n-v$.
\end{lem}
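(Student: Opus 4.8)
The natural plan is induction on $n$, using the laceability of the subcubes $Q_{n-1}$ (by Lemma~\ref{L2} when $n-1\ge 3$, and by a direct check that $Q_2$ is laceable when $n-1=2$) to route one half of the path, together with the statement being proved, used as the induction hypothesis, to route the other half. Write $p=Par(x)=Par(y)$ and $q=Par(v)=1-p$. The parity hypothesis is exactly what makes the vertex counts consistent: deleting the parity-$q$ vertex $v$ leaves $2^{n-1}$ vertices of parity $p$ and $2^{n-1}-1$ of parity $q$, which is what a Hamiltonian path with both endpoints of parity $p$ requires. The base case $n=2$ is a direct check: $Q_2$ is a $4$-cycle, and deleting the parity-$q$ vertex $v$ leaves a path on three vertices whose endpoints are precisely the two parity-$p$ vertices, namely $x$ and $y$.

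For the inductive step with $n\ge 3$, I would first pick a dimension $i$ in which $x$ and $y$ differ (such a dimension exists because $x\ne y$) and split $Q_n$ into $Q^i_0$ and $Q^i_1$. This forces $x$ and $y$ into different subcubes, say $x\in Q^i_0$ and $y\in Q^i_1$; since the claim is symmetric in $x$ and $y$ and in the two subcubes, I may also assume $v\in Q^i_0$. Thus $Q^i_1$ contains no deleted vertex, while $Q^i_0$ contains both $x$ and $v$.

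The construction then splices two sub-paths through a single crossing edge. In $Q^i_1\cong Q_{n-1}$ I would take, by laceability, a Hamiltonian path from $y$ (parity $p$) to a suitably chosen vertex $z$ of parity $q$; its crossing neighbour $z^{(i)}\in Q^i_0$ then has parity $p$. In $Q^i_0\cong Q_{n-1}$ I would next invoke the induction hypothesis to obtain a Hamiltonian path of $Q^i_0-v$ from $z^{(i)}$ to $x$ (both of parity $p$, with the deleted vertex $v$ of parity $q$). Concatenating these two paths along the crossing edge $(z,z^{(i)})$ produces a path from $x$ to $y$; it is automatically a simple path since the two pieces lie in disjoint subcubes, and together they cover all of $Q^i_1$ and all of $Q^i_0$ except $v$, i.e.\ exactly $Q_n-v$.

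The one point that needs care — and the only real obstacle — is guaranteeing that the induction hypothesis applies nondegenerately in $Q^i_0$, that is, that the endpoints $z^{(i)}$ and $x$ are genuinely distinct. The sole forbidden choice is $z=x^{(i)}$; since $x^{(i)}\in Q^i_1$ has parity $q$ and $Q^i_1$ has $2^{n-2}\ge 2$ vertices of parity $q$ whenever $n\ge 3$, at least one admissible vertex $z\ne x^{(i)}$ remains, and laceability supplies a Hamiltonian path from $y$ to it. (The subcase $v\in Q^i_1$ is handled identically after swapping the roles of $x$ and $y$.) With this choice the two spliced sub-paths exist and meet all the requirements, completing the induction.
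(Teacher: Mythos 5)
Your proof is correct, and there is nothing in the paper to compare it against: Lemma~\ref{L1} is stated without proof, quoted from \cite{LW} (see also \cite{Tsai}). Your argument --- induction on $n$, splitting along a dimension $i$ in which $x$ and $y$ differ so that the deleted vertex $v$ shares a subcube with exactly one endpoint, routing the fault-free half by laceability (Lemma~\ref{L2} for $n-1\ge 3$, direct check for $Q_2$) from $y$ to a parity-$q$ vertex $z$, and applying the induction hypothesis in $Q^i_0-v$ from $z^{(i)}$ to $x$ --- is the standard proof of hyper-Hamiltonian laceability and is essentially the argument in the cited sources. You also handle the genuinely delicate points: the base case $n=2$ works because deleting one vertex of a $4$-cycle leaves a path whose endpoints are the two parity-$p$ vertices; the endpoint clash $z^{(i)}=x$ is excluded by choosing $z\ne x^{(i)}$, which is possible since $Q^i_1$ has $2^{n-2}\ge 2$ vertices of parity $q$ for $n\ge 3$; and the parity bookkeeping ($z^{(i)}$ of parity $p$, $v$ of parity $q$, hence distinct from both endpoints) is exactly what the induction hypothesis requires. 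One cosmetic remark: you should state explicitly at the outset that $x\ne y$ (implicit in ``three vertices'' and in the existence of a separating dimension), but this is a matter of phrasing, not a gap.
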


\begin{lem}\label{L3} (\cite{Dvorak}) Suppose that $p$, $q$, $r$, and $s$ are four different vertices in $Q_n$ and  $Par(p)=Par(q)=0$, $Par(r)=Par(s)=1$. Then there is a partition of $Q_n$ into two disjoint paths $p\to r$ and $q\to s$.
\end{lem}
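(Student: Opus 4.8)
The plan is to prove the statement by induction on $n$, splitting $Q_n$ into its two subcubes $Q_0$ and $Q_1$ along a suitable dimension $d$ and joining spanning structures of the two halves through crossing edges. For the base case $n=2$ the four vertices are all of $Q_2$, so the claim is verified by inspecting the finitely many arrangements of $p,q,r,s$. For the inductive step I assume the statement in $Q_{n-1}$ and recall that $Q_m$ is laceable for every $m\ge1$ (e.g. Lemma~\ref{L2} with $F=\emptyset$), i.e. any two vertices of opposite parity are joined by a Hamiltonian path.

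I would organise the inductive step by the distribution of the four terminals between $Q_0$ and $Q_1$. The cleanest situation is when one half, say $Q_0$, contains exactly one parity-$0$ and one parity-$1$ terminal \emph{matched to the same path}, e.g. $p,r\in Q_0$ and $q,s\in Q_1$: then a Hamiltonian path $p\to r$ in $Q_0$ and a Hamiltonian path $q\to s$ in $Q_1$, both given by laceability, already form the desired partition. Every other distribution is reduced to this idea by routing a path across the split: if a path must cross (for instance $p\in Q_0$, $r\in Q_1$), I pick an auxiliary crossing edge $(x,x^{(d)})$ and split that path as $p\to x$ inside $Q_0$ followed by $x^{(d)}\to r$ inside $Q_1$. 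Since crossing flips parity, choosing $x$ of the appropriate parity makes both halves legitimate path-endpoints, and collecting the endpoints that fall in each subcube yields, in $Q_0$ and in $Q_1$ separately, either an instance of the inductive hypothesis (two parity-$0$ and two parity-$1$ prescribed endpoints) or, when only one terminal lies in a half, a single laceable Hamiltonian path.

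This handles the remaining types in turn. The ``$3$--$1$'' split (three terminals in $Q_0$, one in $Q_1$) reduces to the inductive hypothesis in $Q_0$, with one endpoint replaced by the foot $x$ of a crossing edge, plus a laceable path in $Q_1$. The ``$2$--$2$'' splits, both the parity-balanced one ($\{p,s\}$ and $\{q,r\}$) and the parity-unbalanced one ($\{p,q\}$ and $\{r,s\}$), reduce to \emph{two} applications of the inductive hypothesis, one in each half, glued along two crossing edges $(x,x^{(d)})$ and $(y,y^{(d)})$ with $x,y$ chosen of prescribed parities. Finally the ``$4$--$0$'' split, with all four terminals in $Q_0$, is handled by applying the inductive hypothesis in $Q_0$ and then absorbing all of $Q_1$ by replacing one edge $(a,b)$ of one resulting path with the detour $a\to a^{(d)}\to\cdots\to b^{(d)}\to b$, where $a^{(d)}\to b^{(d)}$ is a Hamiltonian path of $Q_1$ (again laceability, since the edge $(a,b)$ forces $a^{(d)}$ and $b^{(d)}$ to have opposite parity).

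The main obstacle is not the gluing but guaranteeing that all prescribed and auxiliary vertices used within a single case are distinct and of the correct parity, so that each invoked sub-instance is legitimate. The auxiliary feet $x,y$ must have prescribed parity in a half while avoiding the images of the true terminals and each other; this is automatic once the half contains enough vertices of each parity, so the only delicate point is bookkeeping in the smallest dimensions, where the number of available parity-$0$ (or parity-$1$) vertices of $Q_{n-1}$ is tiny. I would absorb these few tight configurations either by choosing $d$ so as to land in the easy matched case whenever such a $d$ exists, or by using Lemma~\ref{L1} (a Hamiltonian path avoiding one vertex) to free the needed room, leaving only a bounded list of low-dimensional arrangements to settle by direct construction alongside the $n=2$ base.
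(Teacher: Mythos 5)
The paper does not actually prove this lemma: it imports it verbatim from Dvo\v{r}\'ak's paper (reference \cite{Dvorak}), where it arises in the context of Hamiltonian cycles through prescribed edges. So your proposal cannot match ``the paper's proof''; judged on its own merits, it is a sound and essentially standard self-contained argument. The case analysis is complete (matched $2$--$2$, crossed $\{p,s\}/\{q,r\}$, unbalanced $\{p,q\}/\{r,s\}$, $3$--$1$, $4$--$0$), the parity bookkeeping you invoke is correct --- a vertex keeps its parity in $Q_0$ and flips its relative parity in $Q_1$, so in every case the endpoints handed to the inductive hypothesis or to laceability do have opposite relative parities --- and the $4$--$0$ absorption trick (detouring one path edge $(a,b)$ through a Hamiltonian path $a^{(d)}\to b^{(d)}$ of $Q_1$) is valid since each of the two paths has at least one edge. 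You also correctly identified the only delicate point: the auxiliary crossing feet. For $n\ge4$ each parity class of $Q_0$ has $2^{n-2}\ge4$ vertices, which comfortably accommodates the worst case (in the crossed $2$--$2$ split, $x$ must avoid both $s$ and $q^{(d)}$, and $y$ must avoid $p$ and $r^{(d)}$), but for $n=3$ each class has only two vertices and the generic choice can genuinely fail; your fallback of folding $Q_3$ into the base case is the right fix, and that base is a finite check that does succeed (one can reduce by symmetry to $p=000$, $q=011$ and verify all twelve ordered pairs $(r,s)$, the tightest being $r=111$, $s=001$, solved by the $(6,2)$ split $000\to010\to110\to100\to101\to111$ and $011\to001$). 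Compared with the paper's route, citing Dvo\v{r}\'ak buys brevity and a stronger general theorem behind the scenes, while your induction buys an elementary, self-contained proof in exactly the same subcube-splitting style the paper itself uses for Lemma~\ref{Thm1}; the only caveat is that your phrase ``choose $d$ so as to land in the easy matched case whenever such a $d$ exists'' should not be relied on (such a $d$ need not exist), so the finite verification of $Q_3$ must carry that weight, as your bounded-list remark anticipates.
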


\begin{lem}\label{L4} 
For any two  vertices $x$, $y$,  with  $Par(x)\ne Par(y)$ and an edge $e\ne (x,y)$, there is HP $x\to y$ passing through $e$.
\end{lem}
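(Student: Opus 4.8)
The plan is to view a Hamiltonian path through the prescribed edge as a concatenation of two vertex-disjoint paths obtained by cutting at that edge, and to manufacture the required pair of paths with the partition lemma, Lemma~\ref{L3}. Write $e=(a,b)$; since $e$ is an edge, its endpoints have opposite parity, so by choosing which endpoint to call $a$ I may arrange $Par(a)=Par(x)$ and $Par(b)=Par(y)$. Without loss of generality take $Par(x)=0$ and $Par(y)=1$, hence $Par(a)=0$ and $Par(b)=1$. The guiding observation is that any HP $x\to y$ using $e$ splits at $e$ into a subpath ending at one endpoint of $e$ and a subpath starting at the other; conversely, if $Q_n$ is partitioned into two disjoint paths $x\to b$ and $a\to y$, then traversing the first, crossing $e$ from $b$ to $a$, and traversing the second produces exactly a HP $x\to y$ that passes through $e$.

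First I would dispose of the generic case in which neither $x$ nor $y$ is an endpoint of $e$. There $x,a$ are distinct vertices of parity $0$ and $b,y$ are distinct vertices of parity $1$, while the cross inequalities $a\ne y$ and $b\ne x$ hold for free by parity, so the four vertices are pairwise distinct. I would then apply Lemma~\ref{L3} with $p=x$, $q=a$, $r=b$, $s=y$ to partition $Q_n$ into disjoint paths $x\to b$ and $a\to y$. Disjointness gives $a\notin(x\to b)$ and $b\notin(a\to y)$, so $e$ is internal to neither path, and the concatenation above is a genuine Hamiltonian path from $x$ to $y$ through $e$.

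What remains is the degenerate bookkeeping, which I expect to be the only real obstacle and which needs a second tool. Since $Par(x)=0$, if $x$ lies on $e$ it must be the parity-$0$ endpoint, i.e. $a=x$; symmetrically, $y$ on $e$ forces $b=y$. If both occurred we would have $e=(x,y)$, contradicting the hypothesis, so at most one incidence happens. If $a=x$, then a HP $x\to y$ using $e=(x,b)$ is precisely the edge $x\to b$ followed by a Hamiltonian path of $Q_n-x$ from $b$ to $y$; as $Par(b)=Par(y)=1\ne Par(x)=0$, Lemma~\ref{L1} supplies such a path, and prepending $(x,b)$ finishes the case. The case $b=y$ is symmetric: Lemma~\ref{L1} routes $x\to a$ inside $Q_n-y$ and I append the edge $(a,y)$.

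The case analysis is then exhaustive, since the excluded configuration $e=(x,y)$ is exactly the simultaneous occurrence of both degeneracies. I expect no computation beyond this parity bookkeeping: once the reduction to Lemma~\ref{L3} (generic case) and Lemma~\ref{L1} (one endpoint of $e$ coincides with $x$ or $y$) is in place, the construction of the path is immediate.
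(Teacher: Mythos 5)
Your proof is correct and takes essentially the same route as the paper's: Lemma~\ref{L3} supplies the two-path partition when neither $x$ nor $y$ is incident to $e$, and Lemma~\ref{L1} handles the case where one of them is an endpoint of $e$. Your write-up merely makes explicit the parity bookkeeping, the distinctness checks, and the concatenation across $e$ that the paper's two-line proof leaves implicit.
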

\begin{proof}
If neither  $x$ nor $y$ is incident to $e$, then the lemma follows from Lemma~\ref{L3}. If $x$ or $y$ is incident to $e$, then the lemma follows from Lemma~\ref{L1}. 
\end{proof}

We shall use the following notations. The vertex $x$ is free if it is not incident to any faulty edges. For two dimensions  $i\ne j$,  by $Q^{i,j}_{00}$ we denote the set $\{x\in Q_n:x_i=0\;{\rm and}\; x_j=0\}$. Similarly we define $Q^{i,j}_{01}$, $Q^{i,j}_{10}$, and $Q^{i,j}_{11}$.

\section{Partition}

\begin{lem}\label{Partition}
In $Q_n$, with $n\ge 5$ and disjoint faulty edges, there is a dimension $m$ such that both $Q^m_0$ and $Q^m_1$ have neither SCDHW nor DTBCE.
\end{lem}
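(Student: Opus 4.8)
The plan is to recast both obstructions as statements about how the $2^{n-2}$ crossing edges of each direction are distributed, and then to exploit disjointness as a strong \emph{blocking} mechanism. First I would note that an SCDHW along a direction $i$ inside one half of the $m$-split needs all $2^{n-3}$ crossing edges of that half that go in direction $i$ and have one fixed parity to be faulty, while a DTBCE needs all but two of the $2^{n-2}$ such edges. Writing $c_i$ for the number of faulty edges in direction $i$ and using that disjoint faulty edges form a matching (so $\sum_i c_i\le 2^{n-1}$), this already forces any spoiling direction $i$ to be \emph{rich}, i.e.\ $c_i\ge 2^{n-3}$, and hence at most four directions can ever spoil anything. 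I would then reformulate ``the $m$-split is spoiled by an SCDHW along $i$ of parity $p$'' as ``all the healthy parity-$p$ crossing edges in direction $i$ share the same $m$-th coordinate,'' i.e.\ $m$ lies in the \emph{agreement set} of those edges.

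The key tool is a blocking estimate: if direction $i$ has all but $g$ of its parity-$p$ crossing edges faulty, then \emph{every} other direction $j$ satisfies $c_j\le 2g$. Indeed, those $2^{n-2}-g$ faulty edges saturate all parity-$p$ vertices with $x_i=0$ and all parity-$\bar p$ vertices with $x_i=1$ except $g$ of each; a faulty edge in direction $j\ne i$ keeps $x_i$ constant, so to avoid a saturated endpoint it must be incident to one of these $2g$ free vertices, and distinct faulty edges use distinct vertices.

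With this I would split into two cases. \textbf{Extreme case:} some spoiling witness keeps fewer than $2^{n-4}$ healthy edges on the relevant parity. Then blocking gives $c_j<2^{n-3}$ for all $j\ne i$, so no other direction is rich and $i$ is the only possible witness; since a witness in direction $i$ can never spoil the split $m=i$ (that split destroys all direction-$i$ crossing edges inside the two halves), the direction $m=i$ is good. \textbf{Moderate case:} every witness keeps at least $2^{n-4}$ healthy edges. Then any $\ge 2^{n-4}$ equal-parity endpoints can agree on at most two coordinates besides $i$, so each SCDHW witness spoils at most two directions; attaining two forces exactly $2^{n-4}$ healthy edges, which via blocking caps every other direction at $c_j\le 2^{n-3}$ and rules out a second such witness on a different direction. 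Feeding these constraints into $\sum_i c_i\le 2^{n-1}$ bounds the total number of spoiled directions by $4<n$.

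The hard part will be the bookkeeping of the moderate case: one must combine the matching budget $\sum_i c_i\le 2^{n-1}$ with the blocking estimate to pin the count of spoiled directions at four rather than the naive eight, which is exactly what is needed to stay below $n$ at the boundary $n=5$. Folding in DTBCE is part of this obstacle: such a direction is very expensive ($c_i\ge 2^{n-2}-2$, so at most two occur), and it either has one almost-fully-faulty parity and thus falls into the extreme case, or it is moderate in both parities, in which case isolating a single healthy edge of each parity in one half again forces $\Omega(2^{n-4})$ edges to agree on the split coordinate and so spoils only $O(1)$ directions. Assembling all witnesses into one ``fewer than $n$ spoiled directions'' bound, uniformly for $n\ge 5$, is the crux.
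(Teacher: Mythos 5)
Your reformulation is sound: the translation of ``$Q^m_0$ has SCDHW along $i$'' into ``all healthy parity-$p$ direction-$i$ crossing edges agree in coordinate $m$'', the blocking estimate (whose proof in fact yields the stronger bound $\sum_{j\ne i}c_j\le 2g$), the agreement bound (a set of $g$ fixed-parity endpoints can agree on at most $n-2-\log_2 g$ coordinates besides $i$), and the whole extreme case ($g<2^{n-4}$, take $m=i$) are all correct. But the moderate case, which you yourself flag as the crux, contains a genuine gap rather than deferred bookkeeping, and it sits exactly at the boundary $n=5$ that the lemma must cover. First, your claim that a two-spoiling witness ``rules out a second such witness on a different direction'' fails at equality: $g=2^{n-4}$ gives $c_j\le 2g=2^{n-3}$ for $j\ne i$, while an SCDHW witness in direction $j$ needs only $c_j\ge 2^{n-3}$, so the case $c_j=2^{n-3}$ survives your estimate and would need a separate structural argument about how the direction-$j$ faulty edges, forced onto the $2g$ free vertices, distribute over halves and parities. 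Second, nothing in the sketch actually reduces the naive tally of eight: a single direction $i$ can host witnesses of \emph{both} parities simultaneously (their faulty edges saturate disjoint vertex sets; e.g.\ all of direction $i$ faulty is a perfect matching), each potentially spoiling two splits, so one witness direction alone can spoil four splits and a second direction a fifth --- five candidates at $n=5$ --- and the claimed ``pin at four'' is asserted, not proved. Third, your DTBCE reduction degenerates precisely at $n=5$: ``moderate in both parities'' means $g\ge 2^{n-4}=2$, and since DTBCE in a half exempts one healthy edge of each parity, the agreement constraint falls on only $g-1=1$ edge, whose agreement set is \emph{every} coordinate, so the step ``$\Omega(2^{n-4})$ edges agree, hence $O(1)$ spoiled directions'' has no content exactly where it is needed.

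For comparison, the paper sidesteps all counting with a two-step constructive refinement (Lemmas~\ref{Partition0} and~\ref{Partition1}): if some half $Q^i_0$ has DTBCE along $j$, split along $m=j$; disjointness makes all but one vertex of each parity in $Q^{i,j}_{00}$ (and in $Q^{i,j}_{01}$) free inside the new halves, which kills DTBCE and every SCDHW at once since those obstructions require many saturated vertices. If instead some half has SCDHW along $k$, split along $m=k$; Lemma~\ref{Partition1}(iii) leaves SCDHW along the original direction as the only possible residue, and one further split along a third direction finishes by Lemma~\ref{Partition1}(i)--(ii). That argument exhibits explicit healthy crossing edges at concrete vertices and is uniform in $n\ge 5$, exactly where your extremal bookkeeping becomes delicate. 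Your blocking estimate is a nice observation and your extreme case is complete, but as it stands the moderate case is a program, not a proof.
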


In order to prove the lemma we shall need two lemmas. 

\begin{lem}\label{Partition0}
Suppose that in $Q_n$ with $n\ge 4$, there are two different dimensions, $i$ and $k$.

(i) If there are two vertices $u,v\in Q^{i,k}_{00}$ or   $u,v\in Q^{i,k}_{01}$ such that: $Par(u)\ne Par (v)$ and the edges $(u,u^{(k)})$ and $(v,v^{(k)})$ are healthy. Then there is no SCDHW along dimension $k$ in  $Q^i_0$.

(ii) If there are two vertices $u,v\in Q^{i,k}_{10}$ or $u,v\in Q^{i,k}_{11}$ such that: $Par(u)\ne Par (v)$ and the edges $(u,u^{(k)})$ and $(v,v^{(k)})$ are healthy. Then there is no SCDHW along dimension $k$ in  $Q^i_1$.
\end{lem}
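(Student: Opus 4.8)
The plan is to prove the statement directly from the definition of SCDHW, reducing everything to the observation that the two hypothesized healthy edges necessarily have opposite edge-parities. Recall that a crossing edge of $Q^i_0$ going in dimension $k$ joins a vertex of $Q^{i,k}_{00}$ to a vertex of $Q^{i,k}_{01}$, and by the edge-parity convention its parity equals the parity of its endpoint lying in $Q^{i,k}_{00}$ (the endpoint whose $k$-th coordinate is $0$). For $Q^i_0$ to have a SCDHW along dimension $k$, \emph{every} crossing edge of parity $0$ must be faulty, or \emph{every} crossing edge of parity $1$ must be faulty. So it suffices to exhibit one healthy crossing edge of each parity.

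First I would treat part (i) in the subcase $u,v\in Q^{i,k}_{00}$. Here $u$ and $v$ already lie on the $x_k=0$ side, so the edges $(u,u^{(k)})$ and $(v,v^{(k)})$ have parities $Par(u)$ and $Par(v)$ respectively. Since $Par(u)\ne Par(v)$ by hypothesis, one of these two healthy edges has parity $0$ and the other has parity $1$. Therefore neither SCDHW condition can hold: the parity-$0$ condition is violated by the healthy parity-$0$ edge, and the parity-$1$ condition is violated by the healthy parity-$1$ edge.

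Next I would handle the subcase $u,v\in Q^{i,k}_{01}$, which needs only one extra bookkeeping step. Now $u$ and $v$ have $k$-th coordinate $1$, so the $x_k=0$ endpoints of the edges $(u,u^{(k)})$ and $(v,v^{(k)})$ are $u^{(k)}$ and $v^{(k)}$; hence these edges have parities $Par(u^{(k)})=1-Par(u)$ and $Par(v^{(k)})=1-Par(v)$. Because $Par(u)\ne Par(v)$, these two parities are again distinct, so once more we obtain a healthy crossing edge of each parity and no SCDHW is possible. Part (ii) is entirely symmetric after replacing $Q^i_0$ by $Q^i_1$ and the pair $Q^{i,k}_{00},Q^{i,k}_{01}$ by $Q^{i,k}_{10},Q^{i,k}_{11}$.

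The statement is genuinely elementary, so there is no real graph-theoretic obstacle; the only thing to watch is the correct reading of the edge-parity convention, since in the $Q^{i,k}_{01}$ subcase the parity must be read off from the negated endpoint $u^{(k)}$ rather than from $u$ itself. Once this is tracked carefully, the whole argument reduces to ``one healthy crossing edge of each parity rules out both SCDHW cases,'' with no constructions needed.
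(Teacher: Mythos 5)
Your proof is correct and is essentially the paper's argument: the paper likewise just unfolds the definition of SCDHW, observing (in contrapositive form) that a SCDHW along dimension $k$ forces all vertices of one fixed parity in each of $Q^{i,k}_{00}$ and $Q^{i,k}_{01}$ to be incident to faulty dimension-$k$ edges, which your two healthy edges at vertices of opposite parity contradict. Your explicit tracking of the edge-parity convention in the $Q^{i,k}_{01}$ subcase (reading the parity off $u^{(k)}$ rather than $u$) is a careful spelling-out of what the paper leaves implicit, but it is the same one-line argument.
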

\begin{proof}
(i) If the subcube $Q^i_0$ has SCDHW along dimension $k$, then both in $Q^{i,k}_{00}$ and in  $Q^{i,k}_{01}$ either all vertices of parity 0 or all vertices of parity 1 are incident to faulty edges going in dimension $k$.
\end{proof}

\begin{lem}\label{Partition1}
Suppose that $n\ge 4$ and the subcube $Q^i_0$ (or $Q^i_1$) has SCDHW along dimension $k$, with $k\ne i$. We may assume that $i=1$ and $k=2$. Then:

(i) for each $j>2$, neither $Q^j_0$ nor $Q^j_1$ has SCDHW along dimension $m=1$.

(ii) for each $j>2$, neither $Q^j_0$ nor $Q^j_1$ has SCDHW along dimension $m$, with $m>2$ and $m\ne j$.

(iii) neither $Q^2_0$ nor $Q^2_1$ has SCDHW along dimension $m$, with $m>2$. 
\end{lem}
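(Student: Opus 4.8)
We are given that $Q^i_0$ (say $i=1$) has a SCDHW along dimension $k=2$. By definition, this means all crossing edges of one fixed parity going in dimension $2$ within $Q^1_0$ are faulty. Since $F$ consists of pairwise disjoint edges, these faulty edges form a perfect matching on the parity-class vertices of $Q^{1,2}_{00}$ on one side, matched to $Q^{1,2}_{01}$. The key consequence is that this SCDHW already consumes a large, structured block of faulty edges, all going in dimension $2$ and all sharing the constraint $b_1=0$. The proof strategy is to use Lemma~\ref{Partition0} three times: for each candidate dimension $m$ and each half ($Q^j_0$, $Q^j_1$), exhibit two vertices $u,v$ in a common quarter-cube with $Par(u)\ne Par(v)$ whose crossing edges in dimension $m$ are \emph{healthy}, which immediately rules out a SCDHW there.

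**The plan.** For part (i), I want to show no $Q^j_0$ or $Q^j_1$ (with $j>2$) has a SCDHW along $m=1$. A SCDHW along dimension $1$ would require that all parity-0 (or all parity-1) crossing edges in direction $1$ inside $Q^j_0$ be faulty. But all faulty edges in our configuration go in dimension $2$, not dimension $1$ — the SCDHW in $Q^1_0$ is along dimension $2$. So the direction-$1$ crossing edges are entirely healthy (or at least, any that are faulty are disjoint from the direction-$2$ block and too few). Concretely, I would fix $j$, and within $Q^{1,j}_{00}$ or an appropriate quarter, pick two vertices of opposite parity whose direction-$1$ edges are healthy and apply Lemma~\ref{Partition0}(i) with $(i,k)\to(j,1)$. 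For parts (ii) and (iii), where $m>2$, the same idea works: the faulty edges all go in direction $2$, so for any direction $m\ne 2$ the crossing edges in direction $m$ are untouched by the SCDHW block, and disjointness of $F$ forbids enough additional direction-$m$ faults to build a second SCDHW. The main bookkeeping is choosing the right quarter-cube $Q^{\,\cdot,\cdot}_{\cdot\cdot}$ and the two opposite-parity vertices, then invoking Lemma~\ref{Partition0}.

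**The main obstacle.** The delicate point is \emph{counting}: a SCDHW along dimension $2$ in $Q^1_0$ uses $2^{n-3}$ faulty edges (half the crossing edges in a $(n-1)$-cube). Because $F$ is a disjoint matching, every vertex is incident to at most one faulty edge, so these $2^{n-3}$ edges saturate an entire parity class of $Q^{1,2}_{00}\cup Q^{1,2}_{01}$ in direction $2$ — leaving \emph{no} budget for those same vertices to carry a faulty edge in any other direction. This is exactly what guarantees, in each target quarter-cube, the existence of two opposite-parity vertices with healthy direction-$m$ edges. The hard part will be verifying that in the \emph{worst case} — where the remaining few faulty edges (at most one per still-free vertex) are adversarially placed to try to create a second SCDHW in direction $m$ — there still remain two opposite-parity vertices in a common quarter with healthy crossing edges. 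I expect this reduces to observing that a SCDHW in an $(n-1)$-cube along a single direction needs $2^{n-3}$ disjoint faulty edges all pointing that way, which is incompatible with the already-committed direction-$2$ block under disjointness whenever $n\ge4$.
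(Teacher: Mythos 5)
Your proposal is correct and takes essentially the same route as the paper: the paper's proof of parts (i) and (ii) applies Lemma~\ref{Partition0} exactly as you plan, choosing opposite-parity vertices that are already saturated by the direction-$2$ faulty matching (e.g.\ $u=000^{n-2}$ and $v=010^{n-2}$), so that by disjointness of $F$ their crossing edges in direction $m$ must be healthy, and part (iii) is the same saturation contradiction argued directly. One caution: the ``too few additional faults'' counting framing in your fallback is not what carries the argument --- $F$ may contain up to $2^{n-1}$ disjoint edges, leaving ample cardinality for a second SCDHW --- it is precisely the vertexwise saturation you state (a vertex matched by a direction-$2$ faulty edge can carry no other faulty edge) that does all the work.
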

\begin{proof}
We may assume that  every vertex of parity 0 in $Q^{1,2}_{00}$ is joined by a faulty edge with its neighbour in $Q^{1,2}_{01}$ (and every vertex of parity 1 in $Q^{1,2}_{01}$ is joined by a faulty edge with its neighbour in $Q^{1,2}_{00}$). Consider  partition of $Q_n$ into $Q^j_0$ and $Q^j_1$, for any dimension $j\ge3$, let say $j=3$, see Fig.~\ref{part}.
\begin{itemize}
\item[(i)] First consider two vertices $u=000^{n-2}$ and $v=010^{n-2}$ in $Q^{1,3}_{00}$. They are of different parity and are incident to faulty edges going in dimension 2, so the edges $(u,u^{(1)})$ and $(v,v^{(1)})$ are healthy.  Hence, by Lemma~\ref{Partition0}, $Q^3_0$ has no SCDHW along dimension $m=1$. Next consider two vertices $x=00110^{n-4}$ and $y=01110^{n-4}$ in $Q^{1,3}_{01}$. They are of different parity and  the edges $(x,x^{(1)})$ and $(y,y^{(1)})$ are healthy.  Hence, by Lemma~\ref{Partition0}, $Q^3_1$ has no SCDHW along dimension $m=1$.
\item[(ii)] First suppose that $Q^3_0$ has SCDHW along dimension $m>3$, let say $m=4$. There are two vertices $u=00000^{n-4}$ and $v=01000^{n-4}$ in $Q^{3,4}_{00}$. They are of different parity and the edges $(u,u^{(4)})$ and $(v,v^{(4)})$ are healthy.  Hence, by Lemma~\ref{Partition0}, $Q^3_0$ has no SCDHW along dimension $m=4$. In order to prove that $Q^3_1$ does not have SCDHW along dimension 4, consider two vertices $u=00110^{n-4}$ and $v=01110^{n-4}$ in $Q^{3,4}_{11}$. They are of different parity and the edges $(u,u^{(4)})$ and $(v,v^{(4)})$ are healthy. 
\item[(iii)] Suppose that $Q^2_0$ has SCDHW along dimension $m>2$, let say $m=3$. Then either each vertex of the form 000... and of parity 0 is incident to faulty edge in dimension 3, or each vertex of the form 001...  and of parity 0 is incident to faulty edges going in dimension 3. This is not possible because all vertices of the form 00... and of parity 0 are incident to faulty edges going in dimension 2. Similarly, we can show that in $Q^2_1$ there are no SCDHW along dimension $m$.
\end{itemize} 
\end{proof}

\begin{figure}[htb]
\centering
\includegraphics{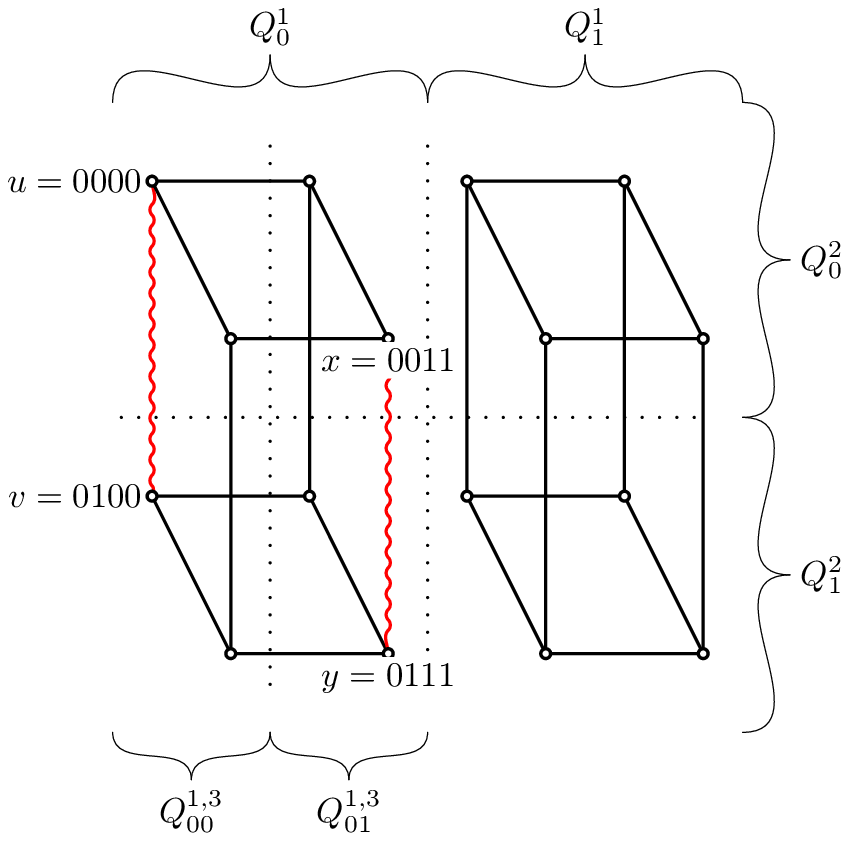}
\caption{$Q_4$ with SCDHW along dimension 2 in $Q^1_0$.}
\label{part}
\end{figure}

\begin{proof}[Proof of Lemma~\ref{Partition}.] 
First suppose that for some dimension $i$, the subcube  $Q^i_0$ or $Q^i_1$ has DTBCE along some dimension  $j\ne i$. We may assume that the subcube $Q^1_0$ has DTBCE along dimension 2. Consider  partition of $Q_n$ into $Q^2_0$ and $Q^2_1$. Then neither $Q^2_0$ nor $Q^2_1$ has  SCDHW or DTBCE. In order to prove this, observe that in the subcube $Q^{1,2}_{00}$ all vertices of parity 0 except one (there are $2^{n-3}-1\ge 3$ such vertices) and all vertices of parity 1 except one are incident to faulty edges in dimension 2, hence they are free inside $Q^2_0$. Similarly, in the subcube $Q^{1,2}_{01}$ at least three vertices of parity 0 and at least three vertices of parity 1 are free inside $Q^2_1$. Observe that in $Q^2_0$:
\begin{itemize}
\item DTBCE  is not possible, because a subcube with DTBCE has at most two free vertices of parity 0.
\item SCDHW along dimension 1 is not possible because in such a case  all vertices of parity 0 (or 1) in $Q^{12}_{00}$ (there are $2^{n-3}\ge 4$ such vertices) are incident to faulty edges inside $Q^2_0$.
\item SCDHW along dimension $k>2$, let say $k=3$,  is not possible, because in such a case  all vertices of the form $000...$ and parity 0 (or 1)  (there are $2^{n-4}\ge 2$ such vertices) are incident to faulty edges inside $Q^2_0$.
\end{itemize}
Similarly we can show that $Q^2_1$ has neither DTBCE nor SCDHW.

Next suppose that for each dimension $i$, neither $Q^i_0$ nor $Q^i_1$ has DTBCE, and that for some $i$, $Q^i_0$ has SCDHW along dimension $k$ with $k\ne i$. 
We may assume that $Q^1_0$ has SCDHW along direction 2.
Now, consider the partition along direction 2. By Lemma~\ref{Partition1}, the subcubes $Q^2_0$ or $Q^2_1$ may have SCDHW only along direction 1. If they have, then consider partition along direction 3. By Lemma~\ref{Partition1}, neither $Q^3_0$ nor $Q^3_1$ has any SCDHW.
\end{proof}

\section{Main results}

\begin{thm}\label{MT1}
The cube $Q_n$ with $n\ge 4$ and pairwise disjoint faulty edges, is not Hamiltonian if and only if it has SCDHW.
\end{thm}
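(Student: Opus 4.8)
The plan is to obtain the substantive (``if'') direction by induction on $n$, dovetailed with the laceability characterization the paper is aiming for; the forward direction needs no induction. Indeed, if $Q_n$ has a SCDHW along a dimension $i$, say all crossing edges of parity $0$ in dimension $i$ are faulty, then inside the subgraph $T=Q^i_0$ every parity-$0$ vertex $x$ has its only outgoing edge $(x,x^{(i)})$ faulty; since $T\cong Q_{n-1}$ is balanced, $T$ is a trap disconnected halfway (a $Q_{n-1}$-DHW), so $Q_n$ is not Hamiltonian by Lemma~\ref{lem2}.

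For the converse I would assume $Q_n$ has no SCDHW and construct a Hamiltonian cycle by induction on $n$. The base case $n=4$ is the computer-assisted classification of~\cite{DybSzepCycle}: $Q_4$ with disjoint faulty edges is non-Hamiltonian exactly when it contains a $Q_3$-DHW trap, i.e. a SCDHW, so the absence of SCDHW forces Hamiltonicity (and, reading ``six parallel faulty edges'' as DTBCE, the absence of both SCDHW and DTBCE forces laceability). For the step $n\ge 5$, I would apply Lemma~\ref{Partition} to pick a dimension $m$ for which both halves $Q^m_0$ and $Q^m_1$, each a copy of $Q_{n-1}$ with disjoint faulty edges, have neither SCDHW nor DTBCE; by induction each half is then laceable. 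Since $Q_n$ itself has no SCDHW, dimension $m$ carries a healthy crossing edge of parity $0$, say $(u,u^{(m)})$ with $u\in Q^m_0$ and $Par(u)=0$, and a healthy crossing edge of parity $1$, say $(v,v^{(m)})$ with $v\in Q^m_0$ and $Par(v)=1$. Then $Par(u^{(m)})=1$ and $Par(v^{(m)})=0$, so laceability yields a HP $u\to v$ in $Q^m_0$ and a HP $u^{(m)}\to v^{(m)}$ in $Q^m_1$; splicing these two paths with the healthy crossing edges $(u,u^{(m)})$ and $(v,v^{(m)})$ closes a Hamiltonian cycle of $Q_n$.

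The main obstacle is exactly the clause ``by induction each half is laceable''. Plain Hamiltonicity of $Q_{n-1}$ does not suffice, because the stitching forces the path endpoints to be the ends of the two prescribed healthy crossing edges, that is, an arbitrary pair of opposite-parity vertices, and realizing such a path is the laceability property. This is why Lemma~\ref{Partition} is engineered to exclude DTBCE as well as SCDHW: a half carrying a DTBCE is still Hamiltonian by Lemma~\ref{DTBCE1}(i) yet admits no Hamiltonian path between its two forbidden vertices, and that forbidden pair could be forced on us as the endpoints $u,v$. I therefore expect the clean route to be a simultaneous induction establishing Theorem~\ref{MT1} together with the laceability statement (no SCDHW and no DTBCE $\Rightarrow$ laceable); the cycle construction above then goes through verbatim once laceability of the two $(n-1)$-dimensional halves is in hand.
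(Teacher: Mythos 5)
Your architecture is essentially the paper's: the forward direction via Lemma~\ref{lem2} is identical, and for the converse the paper likewise funnels everything through a laceability lemma (Lemma~\ref{Thm1}) proved by induction on $n$ using Lemma~\ref{Partition}, with the base case $n=4$ checked by computer. The one real difference is the final assembly. The paper argues contrapositively: not Hamiltonian $\Rightarrow$ not laceable $\Rightarrow$ (by Lemma~\ref{Thm1}) SCDHW or DTBCE, and then disposes of the DTBCE alternative by citing Lemma~\ref{DTBCE1}(i). Your splice --- two healthy crossing edges of opposite parity in dimension $m$, a HP $u\to v$ in $Q^m_0$, a HP $u^{(m)}\to v^{(m)}$ in $Q^m_1$, joined through $(u,u^{(m)})$ and $(v,v^{(m)})$ --- builds the cycle directly and treats cubes with DTBCE uniformly, so Lemma~\ref{DTBCE1}(i) (which the paper states without proof) is never needed for Theorem~\ref{MT1}. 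That is a small but genuine simplification, and your diagnosis of why mere Hamiltonicity of the halves is useless and why DTBCE must be excluded from them is exactly the point of Lemma~\ref{Partition}.

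The caveat concerns your closing sentence: the ``simultaneous induction'' as you state it --- Theorem~\ref{MT1} together with ``no SCDHW and no DTBCE $\Rightarrow$ laceable'' --- does not close as an induction. Plain laceability of the halves is too weak to propagate laceability to $Q_n$: when both endpoints $A,B$ lie in the half $Q_0$ that contains a faulty edge, the Hamiltonian path inside $Q_0$ must be \emph{forced through} some faulty edge $f'=(p,q)$, so that $f'$ can be replaced by a detour through $Q_1$ along the crossing edges at $p$ and $q$ (healthy because $F$ consists of disjoint edges); a similar forcing is needed when all faulty edges sit in the far half. This is precisely why the paper's Lemma~\ref{Thm1} carries the strengthened clause (La2) --- existence of a HP passing through exactly one faulty edge --- as part of the inductive statement, invoked in its Subcases 2.1, 3.2, 4.2 and 4.3, and why the $n=4$ computer check covers (La2) as well. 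Your reduction of Theorem~\ref{MT1} to laceability of the two halves is sound, but the induction hypothesis you run alongside it must include an (La2)-type statement; with that strengthening your cycle construction goes through as written.
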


\begin{thm}\label{MT2}
The $Q_n$ with $n\ge 4$ and disjoint faulty edges is laceable if for every dimension $i$, there are at least three healthy crossing edges, and they are not of the same parity.
\end{thm}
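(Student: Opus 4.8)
The plan is to induct on $n$, after observing that the hypothesis is precisely the conjunction, over all dimensions $i$, of ``no SCDHW along $i$'' and ``no DTBCE along $i$.'' Requiring the healthy crossing edges in direction $i$ to be not all of one parity rules out an entire parity class being faulty (SCDHW), while requiring at least three of them rules out the configuration of exactly two healthy crossing edges of opposite parity (DTBCE); conversely, absence of both traps forces a healthy edge of each parity and, since two of opposite parity would already be a DTBCE, forces at least three. Thus it suffices to show that $Q_n$ ($n\ge4$) with disjoint faults and neither trap is laceable. For the base case $n=4$ I would quote the computer search mentioned in the introduction: the only non-laceable $Q_4$ are those with a $Q_3$-DHW (an SCDHW) or with six parallel faulty edges (which leave only two healthy crossing edges in some direction, hence an SCDHW or a DTBCE), and both are excluded.

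For the inductive step $n\ge5$ I would apply Lemma~\ref{Partition} to obtain a dimension $m$ such that $Q^m_0$ and $Q^m_1$ each avoid SCDHW and DTBCE, so both halves are laceable by the inductive hypothesis. Since $Q_n$ itself avoids both traps along $m$, direction $m$ carries at least three healthy crossing edges, at least one of each parity; by pigeonhole one parity class has at least two. Fix $x,y$ with $Par(x)=0$ and $Par(y)=1$ and split on their location. If they lie in different halves, I would pick a healthy crossing edge $(c,c^{(m)})$ whose parity makes both resulting half-paths admissible, take a Hamiltonian path of one half ending at $c$ and a Hamiltonian path of the other half starting at $c^{(m)}$, and join them through the crossing; the parity and subcube bookkeeping guarantees $c$ is distinct from $x$ and $y$ (and $c^{(m)}$ from $y$), so no degeneracy occurs. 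This case rests only on the laceability of the two halves and the existence, from ``no SCDHW,'' of a healthy crossing edge of the needed parity.

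The case $x,y\in Q^m_0$ is the crux. Since all of $Q^m_1$ must be visited while both ends remain in $Q^m_0$, the path must leave and re-enter $Q^m_0$ through (at least) two crossing edges, so $Q^m_0$ has to be covered by two vertex-disjoint paths with prescribed ends---and if one crossing is taken at $y$ itself this degenerates into a Hamiltonian path of $Q^m_0-y$---while $Q^m_1$ is covered by a single Hamiltonian path and the pieces are stitched through the crossings. Concretely I would aim for the pattern $x\to\!\cdots\!\to c\to c^{(m)}\to\!\cdots\!\to y^{(m)}\to y$, requiring a Hamiltonian path of $Q^m_0-y$ from $x$ to a parity-$0$ vertex $c\ne x$, a Hamiltonian path of $Q^m_1$ from $c^{(m)}$ to $y^{(m)}$, and the crossings $(c,c^{(m)})$, $(y,y^{(m)})$ healthy; should $(y,y^{(m)})$ be faulty I would cross $x$ out at the start instead, and should both $(x,x^{(m)})$ and $(y,y^{(m)})$ be faulty I would revert to two crossings placed away from $x,y$, which then exist because all three guaranteed healthy crossings avoid $x$ and $y$.

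The main obstacle is that $Q^m_0$ still carries its own disjoint faults, which may be numerous, so the fault-free partition result (Lemma~\ref{L3}) and the vertex-avoiding path result (Lemma~\ref{L1}) cannot be invoked on $Q^m_0$ as they stand: laceability of the half alone does not supply a spanning two-path cover. I expect the real work to be carrying a fault-tolerant two-path (and one-vertex-avoiding) partition property through the very same induction, with the $n=4$ computer check as its base and the partition-and-splice construction above as its step, so that Lemmas~\ref{L1} and~\ref{L3} enter only at the terminal fault-free reductions. The degenerate coincidences, where a needed crossing endpoint would collide with $x$ or $y$, are controlled by the fact that exactly two healthy crossing edges of opposite parity would constitute a DTBCE: since that is excluded, at least one parity class offers a second healthy crossing edge, so at most one coincidence can be forced, and the remaining one is absorbed by the one-vertex-avoiding path of Lemma~\ref{L1}, whose parity hypothesis holds because the omitted vertex has parity opposite to the two path ends.
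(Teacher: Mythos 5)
Your reduction of the hypothesis to ``neither SCDHW nor DTBCE'' is correct, and your Case 1 (endpoints in different halves) and your use of Lemma~\ref{Partition} match the paper. But you have correctly identified, and then left open, the crux: when both endpoints lie in $Q^m_0$ and $Q^m_0$ contains faulty edges, laceability of the halves is not enough, and the fault-free Lemmas~\ref{L1} and~\ref{L3} do not apply. Your proposed fix --- carrying a \emph{fault-tolerant} two-path partition and a fault-tolerant vertex-avoiding path property through the induction --- is a genuinely different and much heavier strengthening: you never formulate it precisely, never verify it at $n=4$ (the computer check you cite established laceability, not two-path partitions), and never sketch how the inductive step would reproduce it for $Q_n$; such spanning two-path statements come with their own trap analysis and there is no reason offered that SCDHW/DTBCE are the only obstructions to them. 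As written, your induction does not close.

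The paper closes this gap with a different, lighter strengthening that your proposal is missing. Lemma~\ref{Thm1} carries through the induction, alongside laceability (La1), the property (La2): between any two vertices of opposite parity there is a Hamiltonian path passing through \emph{exactly one} faulty edge. Then in your problematic case ($A,B\in Q_0$ with a fault $f\in Q_0$) one takes, by (La2) in $Q_0$, a HP $A\to_F B$ through some faulty edge $f'=(u,v)$, deletes $f'$, and detours through $Q_1$ via the crossing edges $(u,u^{(1)})$, $(v,v^{(1)})$ and a HP $u^{(1)}\to v^{(1)}$ in $Q_1$; the key point, which exploits the disjointness of $F$, is that these two crossing edges are automatically healthy because they are incident to the faulty edge $(u,v)$. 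With this trick, Lemmas~\ref{L1} and~\ref{L3} are only ever invoked in a half that is fault-free (Subcases 2.2 and 3.3), exactly as they stand. Of course (La2) itself must then be proved in the induction (the paper's Cases 3 and 4, via a faulty crossing edge or a fault in one half), and your plan contains nothing playing that role. So the missing idea is not a stronger partition lemma but the $\to_F$ strengthening plus the observation that disjoint faults guarantee healthy crossings at the ends of any faulty edge.
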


These two theorems will follow from the lemma:

\begin{lem}\label{Thm1} 
Suppose that in $Q_n$ with  $n\ge 4$, the set of faulty edges $F$ consists of  pairwise disjoint edges and there is neither SCDHW nor DTBCE. Then:

(La1) For any two vertices $A,B\in Q_n$ with $Par(A)\ne Par(B)$, there is a HP $A\to B$,

(La2) For any two vertices $A,B\in Q_n$ with $Par(A)\ne Par(B)$, if $|F|\ge 2$ or $|F|=1$ and $A$, $B$ do not form the edge $(A,B)\in F$, then there is a faulty edge $e$ and a HP $A\to B$ passing through $e$ and not through any other faulty edges.
\end{lem}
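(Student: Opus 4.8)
The plan is to prove (La1) and (La2) simultaneously by induction on $n$, carrying them together as a single strengthened hypothesis so that, in the inductive step, each half-cube may supply either a fault-free Hamiltonian path (for (La1)) or a path through one prescribed faulty edge (for (La2)). The base case $n=4$ is settled by the explicit classification of the non-laceable $Q_4$'s recorded in the Introduction: the only obstructions there are $Q_3$-DHW (that is, SCDHW) and six parallel faulty edges (that is, DTBCE), so under our hypotheses both (La1) and (La2) hold for $n=4$, with the finitely many $|F|\le 1$ subcases of (La2) checked by hand. For the inductive step $n\ge 5$ I would apply Lemma~\ref{Partition} to fix a dimension $m$ splitting $Q_n$ into $Q_0=Q^m_0$ and $Q_1=Q^m_1$, both free of SCDHW and DTBCE, so that the induction hypothesis applies to each half.

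The combination then splits according to the positions of $A$ and $B$. If $A\in Q_0$ and $B\in Q_1$ lie in different halves, I would pick a vertex $C\in Q_0$ with $Par(C)=Par(B)$ whose crossing edge $(C,C^{(m)})$ is healthy and $C^{(m)}\ne B$, take a Hamiltonian path $A\to C$ in $Q_0$, cross on $(C,C^{(m)})$, and finish with a Hamiltonian path $C^{(m)}\to B$ in $Q_1$; the parity bookkeeping ($Par(C^{(m)})\ne Par(B)$ follows from $Par(A)\ne Par(B)$) makes both half-paths legal, and both exist by induction. If $A,B$ lie in the same half, say $Q_0$, I would take a Hamiltonian path $P$ of $Q_0$ from $A$ to $B$ and splice in a full traversal of $Q_1$: locate an edge $(D,E)$ of $P$ whose two crossing edges $(D,D^{(m)})$ and $(E,E^{(m)})$ are both healthy, delete $(D,E)$, and reconnect through a Hamiltonian path $D^{(m)}\to E^{(m)}$ of $Q_1$ (legal since $D^{(m)},E^{(m)}$ are adjacent, hence of different parity). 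For (La2) the same two patterns are used, but the single prescribed faulty edge $e$ is forced to carry the path exactly once: if $e$ lies inside a half I would route that half using the (La2) hypothesis and the other using (La1); if $e$ is a crossing edge of dimension $m$ I would use $e$ itself as the connecting link (or as one of the two splice links), with all other crossings chosen healthy.

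The main obstacle is the existence of the splice link in the same-half case, i.e. an edge $(D,E)$ of $P$ both of whose endpoints have healthy crossing edges. Writing $f_m$ for the number of faulty crossing edges in dimension $m$, a bad endpoint kills at most two edges of $P$, so a good link is guaranteed whenever $2f_m<2^{n-1}-1$, i.e. $f_m\le 2^{n-2}-1$; absence of SCDHW along $m$ bounds each parity class of faulty crossings by $2^{n-2}-1$, which is not by itself sufficient. The delicate point is the \emph{heavy} case $f_m\ge 2^{n-2}$, where dimension $m$ absorbs most of the (disjoint, hence at most $2^{n-1}$) faulty edges. I expect to resolve this by observing that at most two dimensions can be heavy, so for $n\ge 5$ the freedom in Lemma~\ref{Partition} should let me choose a partition dimension that is simultaneously light and yields SCDHW/DTBCE-free halves; any residual heavy configuration in which this cannot be arranged would be handled by an \emph{ad hoc} reroute that uses the (La2) hypothesis on $Q_1$ (passing once through a crossing-adjacent faulty edge) in place of insisting on two healthy links.

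Finally I would record how the theorems drop out. Statement (La1) is precisely the laceability asserted in Theorem~\ref{MT2}, since the stated edge condition is equivalent to the absence of SCDHW and DTBCE. For the Hamiltonian cycle of Theorem~\ref{MT1} I would pick any healthy edge $(A,B)$, obtain a Hamiltonian path $A\to B$ from (La1), and close it with $(A,B)$; as this path has $2^n-1\ge 3$ edges and $A,B$ are its endpoints, it does not itself traverse $(A,B)$, so the closure is a fault-free Hamiltonian cycle.
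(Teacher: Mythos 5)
Your plan matches the paper's skeleton (joint induction on (La1)$+$(La2), base case $n=4$, partition via Lemma~\ref{Partition}, case split on the positions of $A,B$), and your cross-half construction for (La1) is essentially the paper's Case~1. But the same-half case contains a genuine gap, and it is exactly the point where the paper's proof turns differently. You splice an arbitrary healthy edge $(D,E)$ of a Hamiltonian path $A\to B$ in $Q_0$ and need both crossing edges $(D,D^{(m)})$, $(E,E^{(m)})$ healthy; you correctly compute that this is only guaranteed when $f_m\le 2^{n-2}-1$, and then leave the heavy case to an unproved hope that Lemma~\ref{Partition} offers a choice of a simultaneously light and SCDHW/DTBCE-free dimension (it only asserts existence of one good dimension, which can be forced and heavy, e.g.\ when almost all faulty edges are crossings of a single direction), plus an unspecified ``ad hoc reroute.'' The paper avoids the counting entirely by two devices you are missing. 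First, if $Q_0$ contains a faulty edge, it applies the inductive (La2) \emph{inside} $Q_0$ to get a path $A\to B$ passing through a faulty edge $f'=(u,v)$, and splices precisely there: since the faulty edges are pairwise disjoint, $(u,u^{(m)})$ and $(v,v^{(m)})$ are automatically healthy --- this is the structural reason (La2) is carried in the induction, not merely to prescribe a faulty edge in (La2) itself. Second, if $Q_0$ is fault-free, it does not splice a single path at all: it takes two healthy crossing edges at vertices $x,y\in Q_0$ of different parity (guaranteed by the absence of SCDHW and DTBCE in $Q_n$) and uses Dvo\v{r}\'ak's two-path partition (Lemma~\ref{L3}) to cover $Q_0$ by disjoint paths $A\to x$ and $B\to y$, joined through a Hamiltonian path $x^{(m)}\to y^{(m)}$ of $Q_1$. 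Your proposal never invokes Lemma~\ref{L3}, and without it the (La2) cross-half case also breaks: when every faulty crossing edge has parity $0=Par(A)$, your ``use $e$ as the connecting link'' fails on parity grounds (no Hamiltonian path in $Q_0$ joins two vertices of equal parity), and the paper's Subcase~3.3 instead weaves through dimension $m$ three times, using Lemma~\ref{L3} in both halves.

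A second, smaller gap is the base case. The classification of non-laceable $Q_4$'s quoted in the Introduction yields (La1) for $n=4$, but (La2) with $|F|\ge 2$ --- a Hamiltonian path traversing one faulty edge and avoiding all others --- is strictly stronger than laceability and does not follow from that classification; the paper settles it by an exhaustive machine check of all $Q_4$'s with disjoint faulty edges. Your hand-check of the $|F|\le 1$ subcases (which in fact follows for all $n$ from Lemmas~\ref{L2} and~\ref{L4}) does not cover this, so as written the induction has no valid floor for (La2).
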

\begin{proof}
If $|F|\le 1$, then the theorem follows from Lemmas~\ref{L4} and~\ref{L2}. So we assume that $|F|\ge 2$. Moreover, without loss of generality we may assume that if $A$ and $B$ form an edge, then this edge is healthy. Indeed, any HP $A\to B$ (or $A\to_F B$) does not use the edge $(A,B)$. We may also assume that  $Par(A)=0$ and $Par(B)=1$.

We prove the theorem by induction on $n$. The case $n=4$ we have checked by computer. We have run the program which checks the all 4-dimensional hypercubes with disjoint faulty edges.

Suppose that $n\ge 5$ and that the theorem is valid for smaller cubes. By Lemma~\ref{Partition}, there is a dimension $m$, let say $m=1$, such that after partition of $Q_n$ into $Q^1_0$ and $Q^1_1$, neither $Q^1_0$ nor $Q^1_1$ has SCDHW or DTBCE. Hence, by the induction hypothesis, both $Q^1_0$ and $Q^1_1$  satisfy (La1) and (La2).
\bigbreak
\noindent (La1).

\noindent 
Case 1. $A\in Q_0$ and $B\in Q_1$. Because $Q_n$ has no SCDHW, there is a healthy crossing edge $(u,v)$ of parity 1. By the induction hypothesis, there is a HP $A\to u$ in $Q_0$ and a HP $v\to B$ in $Q_1$. We can build a HP $A\to B$ in $Q_n$ using these two paths and the edge $(u,v)$, see Fig.~\ref{icase1}.
\begin{figure}[ht]
\begin{minipage}[b]{0.45\linewidth}
    \includegraphics[width=\textwidth]{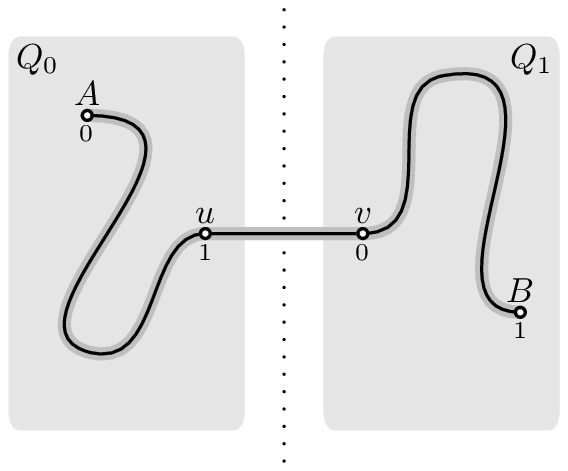}
    \caption{Case 1.}
    \label{icase1}
\end{minipage}
\hfill
\begin{minipage}[b]{0.45\linewidth}
    \includegraphics[width=\textwidth]{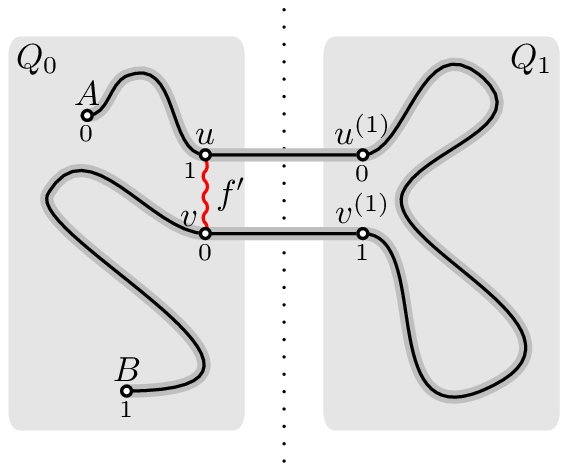}
    \caption{Subcase 2.1}
    \label{icase21}
\end{minipage}
\end{figure}

\smallbreak
\noindent
Case 2. $A,B\in Q_0$.

Subcase 2.1. There is a faulty  edge $f\in F$ with $f\in Q_0$. By the induction hypothesis, there is HP $A\to_F B$ in $Q_0$  passing through a faulty edge $f'=(u,v)$ (it is possible that $f'\ne f$), and there is HP $u^{(1)} \to v^{(1)}$ in $Q_1$. The crossing edges $(u,u^{(1)})$, $(v,v^{(1)})$ are healthy,  because they are incident to the faulty edge $(u,v)$, see Fig.~\ref{icase21}.

Subcase 2.2. There are no faulty edges in $Q_0$. Because $Q_n$ has neither SCDHW nor DTBCE, there are two healthy crossing edges  $(x,x^{(1)})$, $(y,y^{(1)})$ such that: $x,y\in  Q_0$, $Par(x)\ne Par(y)$, and $\{x,y\}\ne \{A,B\}$. We may assume that $Par(A)\ne Par(x)$. By the induction hypothesis, there is HP $x^{(1)} \to y^{(1)}$ in $Q_1$. If $\{A,B\}\cap\{x,y\}=\emptyset$, then by Lemma~\ref{L3}, there is partition of $Q_0$ into two paths $A\to x$ and $B\to y$, see Fig.~\ref{icase22a}.

\begin{figure}[ht]
\begin{minipage}[b]{0.45\linewidth}
    \includegraphics[width=\textwidth]{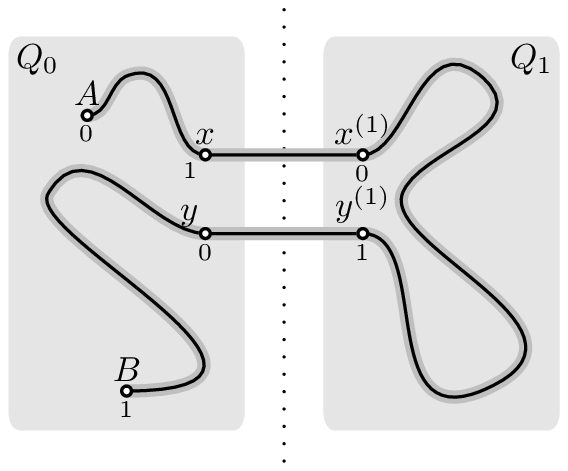}
    \caption{Subcase 2.2, $\{A,B\}\cap\{x,y\}=\emptyset$}
    \label{icase22a}
\end{minipage}
\hfill
\begin{minipage}[b]{0.45\linewidth}
    \includegraphics[width=\textwidth]{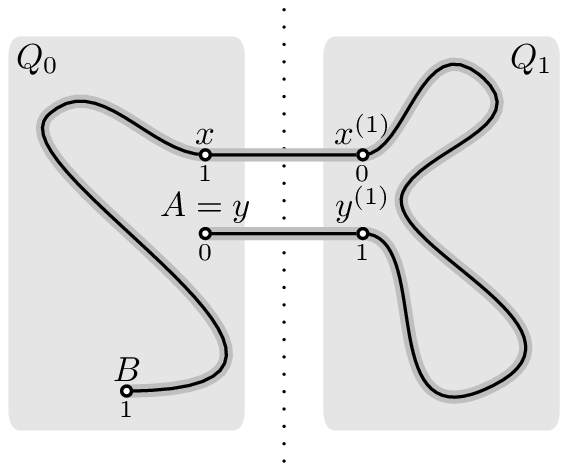}
    \caption{Subcase 2.2, $\{A,B\}\cap\{x,y\}\ne\emptyset$}
    \label{icase22b}
\end{minipage}
\end{figure}
If $\{A,B\}\cap\{x,y\}\ne\emptyset$ then we may assume that $A=y$ and $B\ne x$, and by Lemma~\ref{L1}, there is a HP $x\to B$ in $Q_0-A$, see Fig.~\ref{icase22b}.

\bigbreak
\noindent (La2).

\noindent Case 3. $A\in Q_0$ and $B\in Q_1$.

Subcase 3.1. There is a faulty crossing edge $(u,v)$ of parity 1. The Hamiltonian path in $Q_n$ goes from $A$ to $u$, next crosses to $v$, and then goes from $v$ to $B$, see Fig.~\ref{iicase11}.
\begin{figure}[ht]
\begin{minipage}[b]{0.45\linewidth}
    \includegraphics[width=\textwidth]{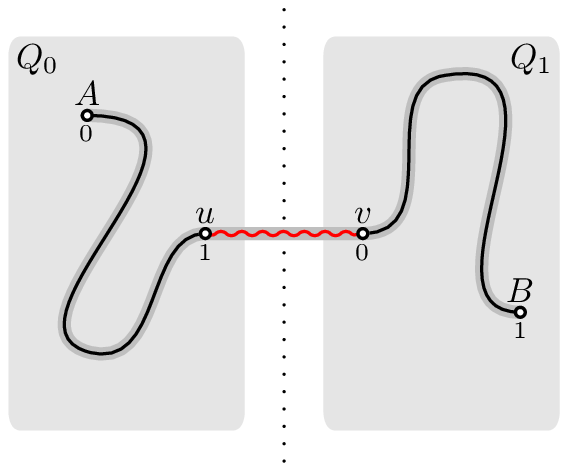}
    \caption{Subcase 3.1}
    \label{iicase11}
\end{minipage}
\hfill
\begin{minipage}[b]{0.45\linewidth}
    \includegraphics[width=\textwidth]{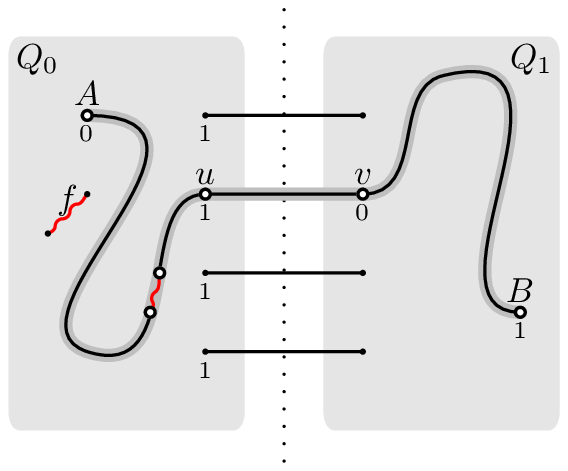}
    \caption{Subcase 3.2}
    \label{iicase12}
\end{minipage}
\end{figure}

Subcase 3.2. All crossing edges of parity 1 are healthy and there is a faulty edge $f=(y,z)$ in  $Q_0$. In this case we choose a healthy crossing edge $(u,v)$ with $u\in Q_0$, $Par(u)=1$ and $u\notin f$. By the induction hypothesis, there is a HP $A\to_F u$ in $Q_0$,
and  there is a HP $v\to B$ in $Q_1$, see Fig.~\ref{iicase12}.

Subcase 3.3. All crossing edges of parity 1 are healthy and   there are faulty edges neither in $Q_0$ nor in $Q_1$. Because $|F|\ge 2$, there are at least two  faulty crossing edges and we can choose an edge $(x,x^{(i)})$ such that $Par(x)=0$ and $x\ne A$. Take two crossing edges $(y,y^{(i)})$, and $(z,z^{(i)})$ with $y,z\in Q_0$ and $Par(y)=Par(z)=1$. There is a partition of $Q_0$ into two disjoint paths $A\to y$ and $x\to z$. If $B\ne x^{(i)}$, then by Lemma~\ref{L3}, we use  a partition of $Q_1$ into two disjoint paths $z^{(i)}\to B$ and $y^{(i)}\to x^{(i)}$, see Fig.~\ref{iicase13a}. If $B=x^{(i)}$, then by Lemma~\ref{L1}, we use a HP $y^{(i)}\to z^{(i)}$ in $Q_1-B$, see Fig.~\ref{iicase13b}.
\begin{figure}[ht]
\begin{minipage}[b]{0.45\linewidth}
    \includegraphics[width=\textwidth]{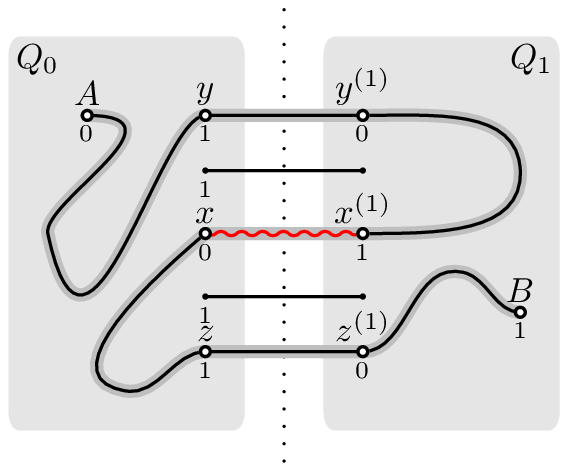}
    \caption{Subcase 3.3. $B\ne x^{(i)}$}
    \label{iicase13a}
\end{minipage}
\hfill
\begin{minipage}[b]{0.45\linewidth}
    \includegraphics[width=\textwidth]{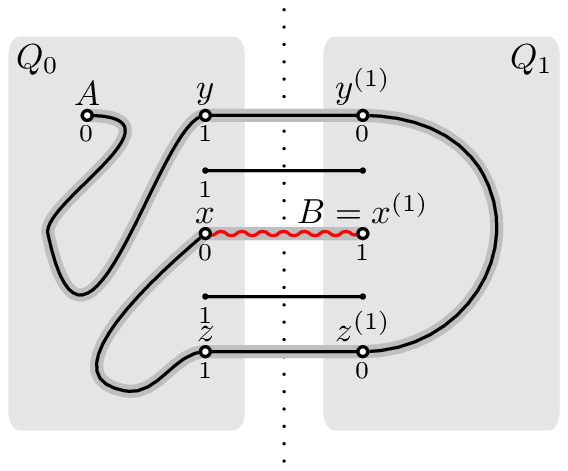}
    \caption{Subcase 3.3. $B=x^{(i)}$}
    \label{iicase13b}
\end{minipage}
\end{figure}

\smallbreak
\noindent Case 4. $A,B\in Q_0$

Subcase 4.1. There are faulty crossing edges. By the induction hypothesis, there is a HP $A\to B$, in $Q_0$. There are two neighbours $y$ and $z$ on the path such that the crossing edge $(y,y^{(1)})\in F$ and the edge $(z,z^{(1)})\notin F$. There is a HP $y^{(i)}\to z^{(i)}$ in $Q_1$, see Fig.~\ref{iicase21}.
\begin{figure}[ht]
\begin{minipage}[b]{0.45\linewidth}
    \includegraphics[width=\textwidth]{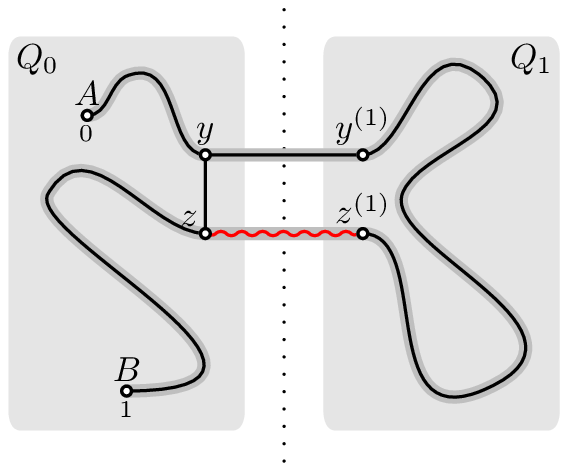}
    \caption{Subcase 4.1}
    \label{iicase21}
\end{minipage}
\hfill
\begin{minipage}[b]{0.45\linewidth}
    \includegraphics[width=\textwidth]{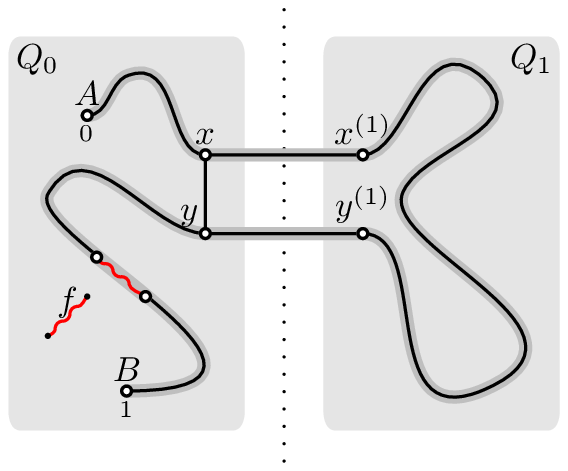}
    \caption{Subcase 4.2}
    \label{iicase22}
\end{minipage}
\end{figure}

Subcase 4.2. There are no faulty crossing edges and there is a faulty edge $f\in Q_0$ with $f\ne (A,B)$. By the induction hypothesis, there is a HP $A\to_F B$ in $Q_0$. 
Choose a healthy edge $(x,y)$ on the path. There is a HP $x^{(i)}\to y^{(i)}$ in $Q_1$, see Fig.~\ref{iicase22}.

Subcase 4.3. There are  no faulty crossing edges and no faulty edges in $Q_0$. Then there are at least two faulty edges in $Q_1$. By the induction hypotheses, there is a HP $A\to B$ in $Q_0$. Choose a (healthy) edge $(x,y)$ on the path. There is a HP $x^{(i)}\to_F y^{(i)}$ 
in $Q_1$, see Fig.~\ref{iicase23}.

\begin{figure}
    \centering
    \includegraphics{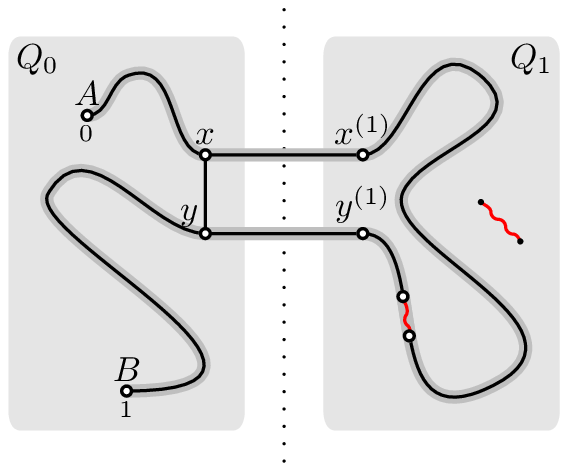}
    \caption{Subcase 4.3}
    \label{iicase23}
\end{figure}
\end{proof}

\begin{proof}[Proof of Theorem~\ref{MT1}.]
By Lemma~\ref{lem2}, $Q_n$ is not Hamiltonian if it has SCDHW. Suppose now that  $Q_n$ is not Hamiltonian,  so it is not laceable and by Lemma~\ref{Thm1}, it  has SCDHW or  DTBCE. However,  by Lemma~\ref{DTBCE1}, $Q_n$ is Hamiltonian if it has DTBCE. 
\end{proof}

\begin{cor} If $n\ge 4$, faulty edges $F$ are pairwise  disjoint, and $|F|<2^{n-2}$, then $Q_n$ is Hamiltonian and laceable.
\end{cor}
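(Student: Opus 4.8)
The plan is to deduce the corollary directly from Lemma~\ref{Thm1} (together with Theorem~\ref{MT1}) by showing that the hypothesis $|F|<2^{n-2}$ is simply too small to permit either of the two obstructions, SCDHW or DTBCE. Thus the whole argument reduces to two edge counts, after which the conclusion is immediate from the machinery already established.

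First I would count the faulty edges forced by an SCDHW. Fix any dimension $i$. The crossing edges in dimension $i$ form a perfect matching of size $2^{n-1}$, and since $Q^i_0$ is itself an $(n-1)$-cube, hence balanced bipartite, exactly $2^{n-2}$ of these crossing edges have parity $0$ and exactly $2^{n-2}$ have parity $1$ (the parity of a crossing edge equals the parity of its endpoint in $Q^i_0$). An SCDHW along dimension $i$ requires all crossing edges of one fixed parity to be faulty, so it forces $|F|\ge 2^{n-2}$. Since $|F|<2^{n-2}$, no dimension can carry an SCDHW.

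Next I would count the faulty edges forced by a DTBCE. By definition a DTBCE along dimension $i$ requires that all crossing edges in dimension $i$ except two be faulty, i.e. at least $2^{n-1}-2$ faulty edges. Because $2^{n-1}-2\ge 2^{n-2}$ for every $n\ge 3$, the hypothesis $|F|<2^{n-2}$ rules this out as well, so no dimension can carry a DTBCE.

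With neither SCDHW nor DTBCE present, Lemma~\ref{Thm1} applies and (La1) yields a Hamiltonian path between any two vertices of opposite parity, which is exactly laceability. Hamiltonicity then follows either from Theorem~\ref{MT1} (absence of SCDHW) or, concretely, by choosing any healthy edge---one exists since $|F|$ is far below the total number of edges---and closing the guaranteed Hamiltonian path between its endpoints into a cycle. I expect no serious obstacle here: the only point requiring care is fixing the two counting thresholds correctly, in particular verifying that $2^{n-1}-2\ge 2^{n-2}$ on the whole relevant range so that DTBCE is excluded by the same bound that excludes SCDHW.
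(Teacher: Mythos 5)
Your proposal is correct and matches the intended derivation: the paper states this corollary without proof, as an immediate consequence of Theorem~\ref{MT1} and Lemma~\ref{Thm1}, and your two edge counts (an SCDHW forces $2^{n-2}$ faulty crossing edges of one parity, a DTBCE forces $2^{n-1}-2\ge 2^{n-2}$ faulty crossing edges for $n\ge 3$) are exactly the calculation needed to rule out both obstructions under $|F|<2^{n-2}$. Both your routes to Hamiltonicity are sound, with Theorem~\ref{MT1} being the cleaner one; nothing is missing.
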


\begin{cor} If $n\ge 4$, faulty edges $F$ are pairwise  disjoint, and there are two nonparallel faulty edges, then $Q_n$ is Hamiltonian.
\end{cor}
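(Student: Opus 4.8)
The plan is to deduce the corollary directly from Theorem~\ref{MT1}, which characterizes non-Hamiltonicity (for $n\ge 4$ with pairwise disjoint faulty edges) as the presence of a SCDHW. Thus it suffices to prove the contrapositive of the relevant implication: if $F$ consists of pairwise disjoint edges and $Q_n$ has a SCDHW, then all faulty edges are \emph{parallel}, i.e.\ they all go in one common dimension. The hypothesis of two nonparallel faulty edges then forbids any SCDHW, and Theorem~\ref{MT1} yields Hamiltonicity.

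First I would analyze the vertex set covered by a SCDHW. Suppose $Q_n$ has a SCDHW along dimension $i$; without loss of generality all crossing edges of parity $0$ in dimension $i$ are faulty. There are $2^{n-2}$ such edges, and they form a matching whose endpoints are exactly the parity-$0$ vertices of $Q^i_0$ together with the parity-$1$ vertices of $Q^i_1$, since each faulty edge $(u,u^{(i)})$ with $u\in Q^i_0$ and $Par(u)=0$ contributes its two endpoints. Consequently, the vertices left \emph{uncovered} by these SCDHW edges are precisely the parity-$1$ vertices of $Q^i_0$ and the parity-$0$ vertices of $Q^i_1$.

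Next I would use disjointness to constrain any further faulty edge. Let $e=(w,w^{(j)})$ be a faulty edge in some dimension $j\ne i$. Since $j\ne i$, the coordinate $i$ is unchanged, so both endpoints $w$ and $w^{(j)}$ lie in the same subcube, either both in $Q^i_0$ or both in $Q^i_1$; moreover they have opposite parities. If both lie in $Q^i_0$, then one of them has parity $0$ and is therefore incident to a SCDHW edge; if both lie in $Q^i_1$, then one of them has parity $1$ and is again incident to a SCDHW edge. In either case $e$ shares a vertex with a faulty edge of the SCDHW, contradicting the assumption that $F$ is pairwise disjoint. Hence every faulty edge goes in dimension $i$, i.e.\ all faulty edges are parallel.

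This completes the reduction: a SCDHW together with disjointness forces all faulty edges into a single dimension, so the presence of two nonparallel faulty edges precludes any SCDHW. By Theorem~\ref{MT1}, $Q_n$ is then Hamiltonian. The only delicate point is the parity bookkeeping of the third paragraph—correctly identifying which vertices the SCDHW matching covers and verifying that a cross-dimensional edge must hit one of them—but this is an elementary argument rather than a genuine obstacle.
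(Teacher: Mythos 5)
Your proposal is correct and matches the paper's intended reasoning: the paper states the corollary without proof, but the key fact you establish---that disjointness plus a SCDHW along dimension $i$ leaves no faulty edges inside $Q^i_0$ or $Q^i_1$, so all faulty edges are parallel---is exactly the observation the authors record in the introduction right after defining SCDHW, and combining it with Theorem~\ref{MT1} is the evident intended derivation. Your parity bookkeeping (the faulty matching covers the parity-$0$ vertices of $Q^i_0$ and the parity-$1$ vertices of $Q^i_1$, so any edge in a dimension $j\ne i$ must hit a covered vertex) is accurate.
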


\begin{cor} If $n\ge 4$, faulty edges $F$ are pairwise  disjoint, and there are three faulty edges of three different dimensions, then $Q_n$ laceable.
\end{cor}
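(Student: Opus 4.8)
The plan is to reduce the statement to Lemma~\ref{Thm1}(La1), which already delivers laceability for any $Q_n$ (with $n\ge4$ and pairwise disjoint faults) that contains neither SCDHW nor DTBCE. Thus it suffices to show that the hypothesis --- three faulty edges lying in three pairwise distinct dimensions --- is incompatible with each of these two obstructions. The common structural observation in both cases is that disjointness of $F$ forces almost every faulty edge to be a crossing edge of a single dimension, so that the set of dimensions in which faults can occur is extremely restricted; counting occupied dimensions then yields the contradiction.

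First I would rule out SCDHW. Suppose $Q_n$ has SCDHW along some dimension $i$, say all $2^{n-2}$ crossing edges of parity $0$ in dimension $i$ are faulty. These edges form a matching covering exactly the parity-$0$ vertices of $Q^i_0$ together with the parity-$1$ vertices of $Q^i_1$. Any further faulty edge, being disjoint from this matching, may use only the uncovered vertices, namely the parity-$1$ vertices of $Q^i_0$ and the parity-$0$ vertices of $Q^i_1$. An edge going in a dimension $j\ne i$ stays inside $Q^i_0$ or inside $Q^i_1$ and must join a parity-$0$ vertex to a parity-$1$ vertex; in either subcube one of these endpoints is covered, a contradiction. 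Hence every faulty edge goes in dimension $i$, so all faults occupy a single dimension --- incompatible with having faulty edges in three distinct dimensions.

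Next I would rule out DTBCE. Suppose $Q_n$ has DTBCE along dimension $i$, with healthy crossing edges $(u,u^{(i)})$ and $(w,w^{(i)})$, where $u,w\in Q^i_0$ have different parity. Now the only uncovered vertices are $u,w$ in $Q^i_0$ and $u^{(i)},w^{(i)}$ in $Q^i_1$. By the same disjointness argument, any faulty edge not going in dimension $i$ must lie entirely inside $Q^i_0$ (hence join $u$ and $w$) or entirely inside $Q^i_1$ (hence join $u^{(i)}$ and $w^{(i)}$). Since $u,w$ have different parity, if they are adjacent they differ in a single coordinate $j\ne i$, and $u^{(i)},w^{(i)}$ differ in that very same coordinate $j$. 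Therefore all faults lie in the two dimensions $i$ and $j$, i.e.\ in at most two dimensions --- again incompatible with three distinct dimensions.

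With both SCDHW and DTBCE excluded, Lemma~\ref{Thm1}(La1) applies and produces a Hamiltonian path between every pair of vertices of opposite parity, so $Q_n$ is laceable. I do not expect a genuine obstacle: the whole argument is covering/counting bookkeeping resting on the disjointness of $F$. The only point deserving slight care is the DTBCE step, where one must confirm that the two possible extra faulty edges $(u,w)$ and $(u^{(i)},w^{(i)})$ necessarily go in the \emph{same} dimension $j$, so that the total number of occupied dimensions is at most two rather than three.
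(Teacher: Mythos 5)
Your proof is correct and follows the route the paper intends: reduce to Lemma~\ref{Thm1}(La1) by showing that disjointness forces all faults of an SCDHW configuration into one dimension and all faults of a DTBCE configuration into at most two dimensions (the paper itself records these disjointness consequences in its remarks following the definitions of SCDHW and DTBCE). Your extra care in checking that $(u,w)$ and $(u^{(i)},w^{(i)})$ lie in the same dimension $j$ is exactly the small detail needed, and it holds.
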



\end{document}